\documentclass[11pt,a4paper]{article}
\textwidth=15cm \hoffset=-1.1cm %

\usepackage{amsthm}
\usepackage{amsmath}
\usepackage{amssymb}
\usepackage{times}
\usepackage{bm}
\usepackage{natbib}
\usepackage{color,tikz,subfigure}      
\usepackage[plain]{algorithm2e}
\usepackage{multirow}

\newcommand{\T}{T}

\newcommand{\fac}[1][k]{a^{(#1)}}
\newcommand{\facall}{a}
\newcommand{\datc}[1][k]{n^{(#1)}}
\newcommand{\Op}{\mathcal{O}}
\newcommand{\mav}{\bar{a}}
\newcommand{\ma}[1][k]{\mav^{(#1)}}
\newcommand{\wmy}{\tilde{y}}
\newcommand{\law}{\mathcal{L}}
\newcommand{\claw}[2]{\mathcal{L}\left\{#1 \mid  #2\right\}}
\newcommand{\fclaw}[1]{\mathcal{L}\left\{ #1 \mid  \cdot \, \right\}}

\newtheorem{theorem}{Theorem}
\newtheorem{proposition}{Proposition}

\newtheorem{lemma}{Lemma}




\begin{document}


\title{
Scalable inference for crossed random effects models
}

\author{O. Papaspiliopoulos\footnote{
Instituci\'o Catalana de Recerca i Estudis Avan\c{c}ats,  Ramon
  Trias Fargas 25-27, Barcelona
  08005. omiros.papaspiliopoulos@upf.edu }
, G.O. Roberts\footnote{
Department of Statistics, University of Warwick, Coventry, CV4 7AL, UK.
gareth.o.roberts@warwick.ac.uk}
\ and G. Zanella\footnote{
Department of Decision Sciences, BIDSA and IGIER, Bocconi University, via Roentgen 1, 20136
Milan, Italy. giacomo.zanella@unibocconi.it}
}

\maketitle

\begin{abstract}
  We analyze the complexity of Gibbs samplers for inference in crossed random effect models used in modern analysis of variance. We demonstrate that for certain designs the plain vanilla Gibbs sampler is not scalable, in the sense that its complexity is worse than proportional to the number of parameters and data. We thus propose a simple modification leading to a collapsed Gibbs sampler that is provably scalable. Although our theory requires some balancedness assumptions on the data designs, we demonstrate in simulated and real datasets that the rates it predicts match remarkably the correct rates in cases where the assumptions are violated. We also show that the collapsed Gibbs sampler, extended to sample further unknown hyperparameters, outperforms significantly alternative state of the art algorithms. 
\end{abstract}


\section{Introduction}

Crossed random effect models are additive models that relate a response variable to categorical predictors. In the literature they appear under various names, e.g. crossclassified data, variance component models or multiway analysis of variance. They provide the canonical framework for understanding the  relative importance of different sources of variation in a data set as argued in \cite{gelman-anova}. For the purposes of this article we focus on linear models according to which
\begin{align}
  \label{eq:anova}
y_{i_1\cdots i_K} \sim N\left\{a^{(0)}+a^{(1)}_{i_1}+\dots+a^{(K)}_{i_K}, (n_{i_1\cdots i_K} \tau_0)^{-1}\right\}, \quad i_k =1,\ldots,I_k, \quad k=1,\ldots,K
\end{align}
where  $\fac$ is a variance component, i.e., the vector of $I_k$ levels, $\fac_{i_k}$, for the $k$-th categorical factor; $\fac[0]$ is a global mean with $I_0=1$ level. These variance components might correspond to both main and interaction effects in categorical data analysis. We work with exchangeable Gaussian random effects, $\fac_j \sim N(0,1/\tau_k)$, for $k>0$,  which is by far the most standard choice, although interesting alternative priors exist as in  \cite{hoff-anova}.
We call a data design one with balanced levels if the same number of observations are made at each level of each factor, but this number can vary with factor, see Section \ref{sec:notation} for a mathematical definition.
The design has balanced cells if the same number of observations are available for each combination of factor levels, i.e., at each cell of the contingency table defined by the categorical predictors.
By construction, a design with balanced cells has also balanced levels.
In the notation of \eqref{eq:anova} we allow $n_{i_1\cdots i_K}=0$, which corresponds to empty cells.  The total number of factor levels, hence of regression parameters, is denoted by $p=\sum_{k=0}^K I_k$, and that of number of observations by $N= \sum_{i_1\cdots i_K}n_{i_1\cdots i_K}$.  Crossed random effect models adapt naturally to  modern high-dimensional but sparse data. For example, they are used in the context of recommender systems where in the simplest setup there are two factors, customers and products, and the response is a rating; the examples in  \cite{GaoOwen2017EJS} are such that $p \ll N \ll I_1 \times I_2$.

Likelihood-based inference for such models requires a marginalisation over the factors. An exact marginalisation is possible in the linear model due to the joint Gaussian distribution of responses and factors. However, this involves matrix operations the cost of which are $\Op(p^3)$, which is prohibitively large in modern applications. For example in the case of recommendation this cost this is typically $\Op(N^{3/2})$, hence infeasible for large datasets. The precision matrices of the Gaussian distributions involved can be computed efficienty, e.g., \cite{wilki04}, and may be sparse, hence black-box sparse linear algebra algorithms can be used for the matrix operations in the hope of reducing the complexity, but it has never been established that this is actually achieved in crossed random effect models. 

Alternatively, Markov chain Monte Carlo can be used to carry out the integration (and the inference more generally). The most popular and convenient algorithm in this context is the Gibbs sampler, which samples  the factors $\fac$ iteratively from their full conditional distributions.  Recently, \cite{GaoOwen2017EJS} sketched an argument that suggests that the complexity of this algorithm,  in the special case that $\fac[0]$ is assumed known, in the context of recommendation where $K=2$ with balanced cell design,  has complexity $\Op(N^{3/2})$. The complexity of a Markov chain Monte Carlo algorithm can be defined  as the product of the computational time per iteration and the number of iterations the algorithm needs to mix.
The heuristic argument in \cite{GaoOwen2017EJS} suggested that the Gibbs sampler is not scalable for crossed random effects models due to its superlinear cost in the number of observations. 

In this article we develop the theory for analysing the complexity of the Gibbs sampler for crossed random effect models under different designs. We propose a small modification of the basic algorithm, the collapsed Gibbs sampler, which we analyse too, and establish rigorously its superior performance and scalability. We obtain rigorous results on the mixing times of the algorithms in Section \ref{s:complexity-Gibbs} and analyse their computational cost in the Appendix. The careful analysis also shows that to an extent the scalability of the Gibbs sampler depends on the design. Our theory is useful beyond the specific designs that have been assumed to derive it.  The essence of the methodology we develop in this article is shown in Figure \ref{fig:two_fac_scaling}, details of which are given in Section \ref{sec:simstudy}.
\begin{figure}[h!]
\centering
\includegraphics[width=0.5\linewidth]{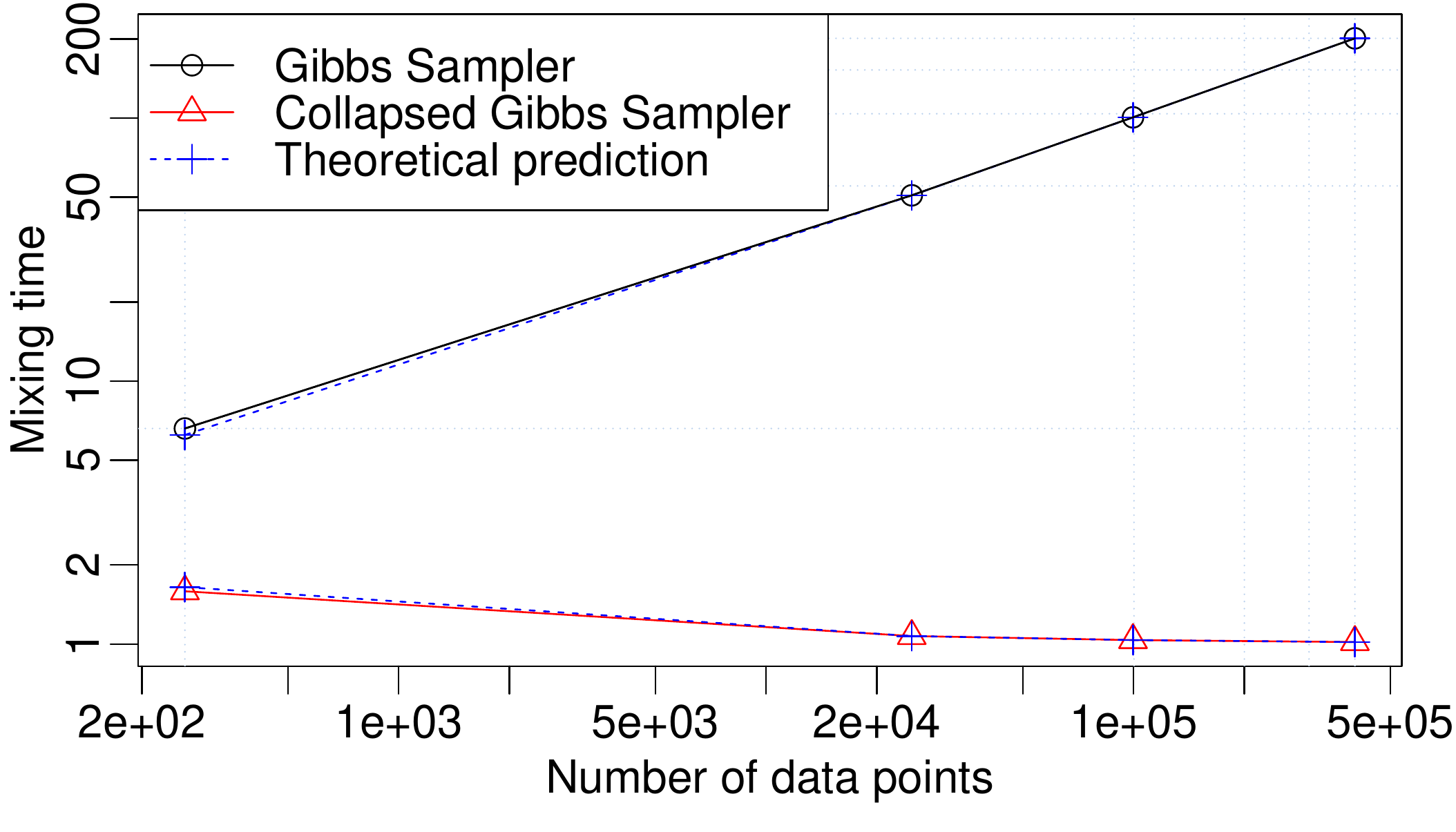}
\caption{Mixing time over number of datapoints for $K=2$, $I_1=I_2$ and $\tau_1=\tau_2=1$.
Data are missing at completely random with probability of missingness 0.9. The exact rates and those predicted by our theory (which, however, does not apply to these designs) are shown.}\label{fig:two_fac_scaling}
\end{figure}
The figure highlights different aspects of our results; we consider modern big data asymptotic regimes where both the number of parameters and observations grow, for $K=2$ factors; only a small fraction of the cells of the contingency table made by the outer product of the two factors is observed; the mixing time of the Gibbs sampler and the collapsed Gibbs sampler can be computed numerically and are plotted versus the size of the datasets, and it is evident the slowing down of the Gibbs sampler and the improvement of the collapsed Gibbs sampler with increasing data sizes; our theory is not applicable in these cases since the resultant designs, which have been generated randomly, are not balanced levels, still the rate that our theory predicts matches quite remarkably the correct rates. We obtain comparable results in a well-known real dataset of student evaluations with 5 factors in Section \ref{sec:InstEval}. For the student evaluations dataset we consider state of the art Markov chain Monte Carlo algorithms that sample the factor levels and the precision parameters in the crossed effect models, including parameter expansion and Hamiltonian Monte Carlo algorithms. We find that the collapsed Gibbs sampler we propose, appropriately extended to sample the precisions, has far superior performance. This is again a setting where our theory has not been developed yet (unbalanced designs, unknown precisions) but where the intuition gained from the simpler settings suggests practically useful algorithms. 
Therefore, our theory leads to generic guidelines for practitioners.

The theory is based upon a multigrid decomposition of the Markov chain generated by the sampler, which allows us to identify the slowest mixing components, and capitalises on existing theory for the convergence of Gaussian Markov chains. The multigrid decomposition  of a Markov chains is a powerful theoretical tool for studying its mixing time, since it provides its decomposition into independent processes. Identifying such decomposition is a kind of art; a previously successful example is in  \cite{zanella_mg} in the context of multilevel nested linear models. We point out that for nested hierarchical models scalable Bayesian computation can also be achieved with deterministic algorithms, such as belief propagation, as in  \cite{zanella_bp}.

The article closes with some conjectures that dictate our future research.

\section{Decompositions of the posterior distribution}

\subsection{Notation}\label{sec:notation}

The statistical model we work with is described in \eqref{eq:anova}. In accordance with standard practice Gaussian priors are used for the factor levels, $\fac_j \sim N(0,1/\tau_k)$, and an improper prior for the global mean, $p(\fac[0]) \propto 1$. When convenient we  write $\fac[0]_{i_0}$, which is is the same as $\fac[0]$. We allow $n_{i_1\cdots i_K}=0$, which corresponds to empty cells in the contingency table defined by the outer product of the categorical factors. With $n$ we denote the data incidence array, a multidimensional array with elements $n_{i_1\cdots i_K}$. Two-dimensional marginal tables extracted from the data incidence matrix are denoted by $\datc[l,k]$ and have elements $\datc[l,k]_{i_l,i_k}$, which is the total number of observations on level $i_l$ of factor $l$ and $i_k$ of factor $k$; margins of this table are denoted by $\datc$, and are vectors of size $I_k$ and elements $\datc_j$, which is the total number of observations with level $j$ on the $k$-th factor. By definition $\sum_j \datc_j=N$, where $N$ is the total number of observations. A data design has balanced levels if $\datc_j=N/I_k$ for every $k$ and $j$, and balanced cells if $n_{i_1\cdots i_K}=N/\prod_k I_k$ for all combinations of factor levels.

Averages of vectors are denoted by an overline, e.g., $\ma$; weighted averages are denoted by a tilde, e.g., $\wmy = \sum_{i_1\cdots i_K} y_{i_1\cdots i_K} n_{i_1\cdots i_K} /N$. The vector of all factor averages is denoted by $\mav$, the first element of which is trivially $\fac[0]$. Negative superscripts in factors denote the vector of factor levels for all factors but the one with the index whose negative value is used in the superscript, e.g., $\fac[-k]$ includes all factor levels except those of $\fac$. Similarly, negative subscripts in factors denote the vector of levels of the given factor except the level whose negative value is used in the subscript, e.g., $\fac_{-j}$ includes all levels of factor $k$ except the $j$-th level; $\facall$ denotes the vector of all levels of all factors. We define $\delta$ to be a residual operator that when applied to a vector returns the difference of its elements from their sample average, e.g., $\delta \fac$ has elements $\fac_j - \ma$ and is referred to as the factor's level increments; $\delta \facall$ denotes the vector of all such increments  (except $\delta \fac[0]$ which is 0 trivially).  

The law of a random variable $X$ is denoted by $\law(X)$, e.g., $\law (\fac_j) = N(0,1/\tau_k)$, and that of $X$ conditionally on $Y$ by $\law(X \mid  Y)$. 
When a joint distribution has been specified for $X$ and other random variables, $\fclaw{X}$ denotes the full conditional distribution of $X$ conditionally on the rest.  

\subsection{Full conditional distributions}
\label{sec:fclaw}

Fairly standard Bayesian linear model calculations yield that
\begin{align}
  \label{eq:fcondmu}
  \fclaw{\fac[0]} & = N\left \{\wmy - {\sum_k \sum_i \fac_i \datc_i \over N}, (N \tau_0)^{-1} \right \},
\end{align}
where the LHS is an example of the abridged notation we shall adopt for the full conditional distribution (in this case of $\fclaw{\fac[0]}$ given all other parameters and data).
With balanced levels this simplifies to
\begin{align}
  \label{eq:fcondmu-bl}
  \fclaw{\fac[0]} & = N\left \{\wmy - \sum_k \ma , (N \tau_0)^{-1} \right \}.
\end{align}
Similarly we obtain that for $k>0$
\begin{align}
  \label{eq:fconda}
  \fclaw{\fac_{j}} & =  N\left \{{\datc_j \tau_0 \over \datc_j \tau_0 + \tau_k} \left (\wmy^{(k)}_j - \fac[0] - {\sum_{l \neq k, l \neq 0} \sum_i \fac[l]_i \datc[k,l]_{j,i} \over \datc_j}\right), (\datc_j \tau_0 + \tau_k)^{-1} \right \},
\end{align}
where $\wmy^{(k)}_j$ is the weighted average of all observations for which their level on factor $k$ is $j$. With balanced cells this simplifies to 
\begin{align}
  \label{eq:fconda-cells}
  \fclaw{\fac_{j}} & =  N\left \{{N \tau_0 \over N \tau_0 + I_k \tau_k} \left (\bar{y} - \sum_{l \neq k} \ma[l] \right), I_k (N \tau_0 + I_k\tau_k)^{-1} \right \},
\end{align}

\subsection{Factorisations}

In balanced levels designs the posterior distribution of regression parameters admits certains factorisation, which are collected together in the following Proposition. 
\begin{proposition}\label{prop:b_l} For balanced levels designs
\begin{align*}
\claw{\mav,\delta \facall}{y}
& =
\claw{\mav}{y} \claw{\delta \facall}{y}\,,
\end{align*}
and
\begin{align}\label{eq:collapsed_iid}
  \claw{\ma[-0]}{y}  
=
 \prod_{k=1}^K \claw{\ma[k]}{y}  \,.
\end{align}
For balanced cells designs we have further
\begin{align*}
\claw{\delta \facall}{y}
& = \prod_{k} \claw{\delta \fac}{y}.
\end{align*}
\end{proposition}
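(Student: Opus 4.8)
\medskip
\noindent\textbf{Proof proposal.} The plan is to exploit that, under \eqref{eq:anova} with the Gaussian priors on $\fac[1],\dots,\fac[K]$ and the flat prior $p(\fac[0])\propto1$, the vector $(\facall,y)$ is jointly Gaussian; hence $\claw{\facall}{y}$ is Gaussian and proper (as soon as $N\geq1$), and each of the three identities is equivalent to the vanishing of an off-diagonal block of its precision matrix. I would start from the unnormalised log-posterior
\[
\log p(\facall\mid y)=-\frac{\tau_0}{2}\sum_{i_1\cdots i_K}n_{i_1\cdots i_K}\Bigl(y_{i_1\cdots i_K}-\fac[0]-\sum_{k=1}^{K}\fac_{i_k}\Bigr)^{2}-\sum_{k=1}^{K}\frac{\tau_k}{2}\sum_{j}(\fac_j)^{2}+\mathrm{const},
\]
substitute $\fac_j=\ma+\delta\fac_j$, and read off the quadratic form in the coordinates $(\fac[0],\ma[1],\dots,\ma[K],\delta\fac[1],\dots,\delta\fac[K])$; this is a linear bijection once the constraints $\sum_j\delta\fac_j=0$ are kept in mind, so statements about $\delta\facall$ are understood as block-diagonality of the precision of these (degenerate) Gaussians.

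For the split $\claw{\mav,\delta\facall}{y}=\claw{\mav}{y}\,\claw{\delta\facall}{y}$: after expanding the square, every monomial that couples the mean block $(\fac[0],\ma[1],\dots,\ma[K])$ to the increment block $\delta\facall$ is a scalar multiple of a mean-block variable times $\sum_{j}\datc[m]_j\,\delta\fac[m]_j$ for some factor index $m$ (such terms arise from $\fac_{i_k}^{2}$, from $\fac[0]\fac_{i_k}$, and from $\fac_{i_k}\fac[l]_{i_l}$ with $k\neq l$, using $\sum_{i_k}\datc[k,l]_{i_k,i_l}=\datc[l]_{i_l}$), while the term linear in $y$ never couples the two blocks. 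Under balanced levels $\datc[m]_j=N/I_m$ for all $j$, so $\sum_{j}\datc[m]_j\,\delta\fac[m]_j=(N/I_m)\sum_j\delta\fac[m]_j=0$; hence the posterior precision is block diagonal between the mean and increment blocks, which is the claimed factorisation.

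For \eqref{eq:collapsed_iid} I would zoom into the mean block. After the above cancellations, $\fac[0]$ and the $\ma$ enter the likelihood part of the quadratic form only through the single scalar $c:=\fac[0]+\sum_{k=1}^{K}\ma$, while the Gaussian priors contribute the diagonal terms $\tfrac12\tau_kI_k(\ma)^{2}$. Since $\fac[0]$ has a flat prior and appears only through $c$, integrating it out removes every dependence on $c$ and leaves $p(\ma[-0]\mid y)\propto\prod_{k=1}^{K}\exp\{-\tfrac12\tau_kI_k(\ma)^{2}\}$, so the $\ma$ are a posteriori independent (indeed each equals its $N(0,(I_k\tau_k)^{-1})$ prior). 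Finally, a balanced-cells design has balanced levels, so the previous parts apply and it only remains to split the increment block: its cross terms between $\delta\fac$ and $\delta\fac[l]$ with $k\neq l$ are $-\tau_0\sum_{i_k,i_l}\datc[k,l]_{i_k,i_l}\,\delta\fac_{i_k}\delta\fac[l]_{i_l}$, and under balanced cells $\datc[k,l]_{i_k,i_l}=N/(I_kI_l)$, so this equals $-\tau_0(N/I_kI_l)\bigl(\sum_{i_k}\delta\fac_{i_k}\bigr)\bigl(\sum_{i_l}\delta\fac[l]_{i_l}\bigr)=0$; since the remaining terms linear in $y$ split as a sum over $k$ of linear forms in $\delta\fac$ alone, the increment block's precision is block diagonal across factors and $\claw{\delta\facall}{y}=\prod_k\claw{\delta\fac}{y}$.

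The part I expect to require the most care is organisational rather than conceptual: expanding the quadratic form cleanly enough to attribute each cross term to the correct pair of coordinate blocks, and handling the improper prior on $\fac[0]$ rigorously — verifying that the posterior is proper on the identifiable subspace so that ``integrating out $\fac[0]$'' is legitimate, and phrasing the $\delta\facall$ statements throughout via precision matrices of degenerate Gaussians so that no dimension-counting artefacts creep in. An alternative route reads the same cancellations off the full conditionals \eqref{eq:fcondmu-bl} and \eqref{eq:fconda-cells}, but the direct computation with the joint density seems cleanest.
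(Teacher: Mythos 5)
Your proposal is correct, and it takes a genuinely different (if closely related) route from the paper. The paper argues locally: it derives the full conditionals, notes that under balanced levels the full conditional of each average block, e.g.\ \eqref{eq:fcondma}, involves only other mean-block quantities (and similarly the increments' full conditionals involve only increments), and then invokes the equivalence of local and global Markov properties from \cite{besag} to upgrade these observations to the stated joint factorisations; \eqref{eq:collapsed_iid} is obtained the same way from the conditional law $N(0,(I_k\tau_k)^{-1})$ of $\ma[k]$ in the collapsed scheme. You instead expand the joint log-posterior quadratic form in the coordinates $(\mav,\delta\facall)$ and show that the relevant cross-block precision entries vanish, which for a (possibly degenerate) joint Gaussian yields independence directly. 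The underlying cancellation is identical in both arguments --- balancedness forces $\sum_j \datc[m]_j\,\delta\fac[m]_j=(N/I_m)\sum_j\delta\fac[m]_j=0$, and balanced cells additionally forces the $\delta\fac$--$\delta\fac[l]$ couplings through $\datc[k,l]$ to vanish --- but your version is self-contained Gaussian linear algebra, avoids the appeal to local-to-global Markov equivalence, and delivers for free the explicit marginals $\claw{\ma[k]}{y}=N\{0,(I_k\tau_k)^{-1}\}$ that the paper reuses later (e.g.\ in Proposition \ref{prop:averages_rate}). The two cautions you flag are well placed but harmless: the posterior is proper because the likelihood has positive precision $N\tau_0$ in $\fac[0]$ and all other priors are proper, and the increment statements are correctly read as block-diagonality of the quadratic form restricted to the $(I_k-1)$-dimensional zero-sum subspaces.
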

The factorisation in \eqref{eq:collapsed_iid} is particularly relevant to the collapsed Gibbs sampler we introduce later in the article. 
A sketch of the proof of Proposition \ref{prop:b_l} is the following. For the first factorisation, directly from \eqref{eq:fconda} with the assumption of balanced levels we obtain that
\begin{align}
  \label{eq:fcondma}
\claw{\ma}{y,\fac[-k],\delta \fac} = N \left\{{N \tau_0 \over N \tau_0 + I_k \tau_k} \left(\wmy - \fac[0] - 
\sum_{l\neq k} \ma[l] \right) , (N \tau_0+I_k \tau_k)^{-1} \right \}.
\end{align}
We use the fact that global and local Markovian properties are equivalent, see, e.g., Section 3 of \cite{besag}. This yields the independence stated in the lemma. 
The proof of \eqref{eq:collapsed_iid} follows by similar arguments using $\claw{\ma[-0,-k]}{y}=N(0,(I_k\tau_k)^{-1})$. The third factorisation is argued in the same way  noting that  \eqref{eq:fconda-cells} implies that
\begin{align*}
\claw{\delta \fac_j}{y,\fac[-k], \delta \fac[-k]}
& = N\left \{0, (I_k-1) (N \tau_0 + I_k\tau_k)^{-1} \right \}.
\end{align*}

\section{Gibbs samplers for inference}
\label{sec:GS}

We consider two main algorithms in this paper. The first is a block Gibbs sampler that updates  in a single block the levels of a given factor conditionally on everything else. Due to the dependence structure in the model, the levels of a given factor conditionally on the rest are independent, hence in practice the sampling is done separately for each factor level, i.e., iteratively from $\fclaw{\fac_{i_k}}$, for $i_k =1,\ldots,I_k$, and $k=0,\ldots,K$; these distributions are specified in Section \ref{sec:fclaw}. We refer to this algorithm as the Gibbs sampler, although it should be understood that it is just one implementation of the scheme. 

We also consider the collapsed version of this algorithm that samples from $\claw{\fac[-0]}{y}$, i.e., the algorithm that is obtained by first analytically integrating out  the global mean $\fac[0]$, and then sampling in blocks the levels of each of the remaining factors; we term this algorithm the collapsed Gibbs sampler. In practice, we implement this algorithm by sampling iteratively from $\fclaw{\fac[0],\fac[k]}$, for $k=1,\ldots,K$. In this implementation we first sample $\claw{\fac[0]}{y,\fac[-0,-k]}$, and then $\fclaw{\fac_{i_k}}$ for $i_k =1,\ldots,I_k$ as in the previous scheme. The implementation of the collapsed Gibbs sampler relies on the  following result.
\begin{proposition}\label{prop:marginal_law}
Denoting $s^{(k)}_j=\datc[k]_j\tau_0/(\tau_k+\datc[k]_j\tau_0)$, then
\begin{align}
  \label{eq:fcondmu_collapsed}
  \claw{\fac[0]}{y,\fac[-0,-k]} & = N\left \{ 
  \frac{1}{\sum_j s^{(k)}_j}
  \sum_j s^{(k)}_j
  \left(\wmy^{(k)}_j-\frac{\sum_{l\neq k}\sum_{i} \fac[l]_i \datc[k,l]_{j,i}}{\datc[k]_j} \right)
  ,
  \frac{1}{\tau_k\sum_j s^{(k)}_j} 
\right \}.
\end{align}
\end{proposition}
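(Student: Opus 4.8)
The plan is to obtain \eqref{eq:fcondmu_collapsed} by a direct computation: write the joint conditional density of $(\fac[0],\fac)$ given $y$ and $\fac[-0,-k]$ from the model \eqref{eq:anova}, marginalise analytically over $\fac$, and read off the resulting Gaussian for $\fac[0]$.

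First I would assemble the joint conditional density of $(\fac[0],\fac)$. Since $p(\fac[0])\propto 1$ and the priors on the remaining factor levels are constant in $(\fac[0],\fac)$, this density is proportional to
\begin{align*}
\exp\Bigl\{-\tfrac{\tau_0}{2}\sum_{i_1\cdots i_K} n_{i_1\cdots i_K}\bigl(y_{i_1\cdots i_K}-\fac[0]-\sum_{l=1}^{K} \fac[l]_{i_l}\bigr)^2-\tfrac{\tau_k}{2}\sum_{j=1}^{I_k}(\fac_j)^2\Bigr\}.
\end{align*}
The key structural observation --- the one that makes the collapsing tractable --- is that once $\fac[-0,-k]$ is fixed, level $j$ of factor $k$ enters only through the cells with $i_k=j$, and there only via the combination $\fac[0]+\fac_j$. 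Grouping the cell sum by $j$ and using $\sum_{i_k=j}n_{i_1\cdots i_K}y_{i_1\cdots i_K}=\datc[k]_j\,\wmy^{(k)}_j$ together with the analogous identity for the two-way margins $\datc[k,l]$, the exponent collapses, up to an additive constant in $(\fac[0],\fac)$, to the weighted least-squares form
\begin{align*}
-\frac{\tau_0}{2}\sum_{j=1}^{I_k}\datc[k]_j\bigl(\fac[0]+\fac_j-m_j\bigr)^2-\frac{\tau_k}{2}\sum_{j=1}^{I_k}(\fac_j)^2,\qquad m_j:=\wmy^{(k)}_j-\frac{\sum_{l\neq k}\sum_{i} \fac[l]_i \datc[k,l]_{j,i}}{\datc[k]_j},
\end{align*}
where $\sum_{l\neq k}$ ranges over $l\in\{1,\dots,K\}\setminus\{k\}$; this is just \eqref{eq:fconda} rewritten.

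Second, since this expression is a sum of contributions each involving a single $\fac_j$, I would integrate the $\fac_j$ out one at a time with the elementary Gaussian identity $\int_{\mathbb{R}}\exp\{-\tfrac{A}{2}(x-\mu)^2-\tfrac{B}{2}x^2\}\,dx\propto\exp\{-\tfrac{AB}{2(A+B)}\mu^2\}$, applied with $A=\datc[k]_j\tau_0$, $B=\tau_k$ and $\mu=m_j-\fac[0]$; here $AB/(A+B)=\tau_k s^{(k)}_j$ with $s^{(k)}_j$ as in the statement. This leaves the marginal density of $\fac[0]$ proportional to $\exp\{-\tfrac{1}{2}\sum_j\tau_k s^{(k)}_j(\fac[0]-m_j)^2\}$, a Gaussian with precision $\tau_k\sum_j s^{(k)}_j$ and mean $(\sum_j s^{(k)}_j m_j)/(\sum_j s^{(k)}_j)$, which upon substituting the expression for $m_j$ is exactly \eqref{eq:fcondmu_collapsed}.

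There is no serious obstacle: the proof is essentially bookkeeping. The one point to get right is the per-level factorisation in the first step, which relies on the conditional-independence structure of the design --- the same idea underlying Proposition \ref{prop:b_l} --- and is what reduces the $I_k$-dimensional Gaussian integral to $I_k$ one-dimensional ones; everything else is the standard ``complete the square and integrate'' manipulation. The only regularity condition needed for the marginal to be a proper Gaussian is $\sum_j s^{(k)}_j>0$, i.e.\ that at least one level of factor $k$ carries an observation, which holds whenever $N\ge 1$. As an alternative one could avoid re-deriving the weighted least-squares form by starting from the already established full conditionals \eqref{eq:fcondmu} and \eqref{eq:fconda}, noting that $(\fac[0],\fac)$ is jointly Gaussian given $(y,\fac[-0,-k])$, and recovering the marginal of $\fac[0]$ from those two conditionals via a generic Gaussian identity; I find the direct route above more transparent and self-contained.
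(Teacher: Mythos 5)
Your derivation is correct. The paper states Proposition \ref{prop:marginal_law} without proof, so there is no argument to compare against; your direct route --- group the cells by the level $j$ of factor $k$ so that the joint conditional density of $(\fac[0],\fac)$ factorises over $j$ into terms each involving $\fac[0]+\fac_j$, then integrate each $\fac_j$ out with the one-dimensional Gaussian identity, yielding precision $\tau_k s^{(k)}_j$ per level --- is the natural one and all the bookkeeping checks out, including the reading of $\sum_{l\neq k}$ as excluding $l=0$ (otherwise the stated mean would depend on $\fac[0]$ itself) and the observation that empty levels contribute $s^{(k)}_j=0$ and drop out harmlessly.
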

%

The reason why we prefer to present the collapsed Gibbs sampler in this way where $\fac[0]$ is updated together with each block as opposed to being integrated out before sampling starts, is because our preferred version is still realisable in more elaborate models, e.g., generalised linear crossed random effects models. In such extensions exact sampling from $\fclaw{\fac[0],\fac}$ might not be feasible, but a Metropolis-Hastings step can be used instead. Additionally, it requires a minimal modification of the Gibbs sampler code to implement.

\section{Multigrid decomposition of the Gibbs samplers}

\subsection{Notation}

For the stochastic processes generated by Markov chain Monte Carlo the time index corresponds to iteration, which is generically denoted by $t$, and it is included in parentheses, e.g., $x(t)$; in such a case the stochastic process over $T$ iterations is denoted by $\{x(t)\}_{t=1}^T$; we write $\{x(t)\}$ when $T=\infty$;  we write $\{(x,z)(t)\}$ to denote  a stochastic process that at each time $t$ takes as value the vector composed by $x(t)$ and $z(t)$. We say that the stochastic process $\{x(t)\}$ is a timewise transformation of another $\{y(t)\}$ if there is a function $\phi$ such that $x(t) = \phi\{y(t)\}$ for all $t$.

\subsection{Main results}

The results we derive in this paper stem from the following result, the proof of which is given in the Appendix. 
\begin{theorem}
  \label{th:multigrid1} (Multigrid decomposition) Let $\{\facall(t)\}$ be the Markov chain generated either by the Gibbs sampler or the collapsed Gibbs sampler for balanced levels designs. Then, the timewise transformations  $\{\mav(t)\}$ and $\{\delta \facall(t)\}$ obtained from $\{\facall(t)\}$  are each a Markov chain and they are independent of each other.
\end{theorem}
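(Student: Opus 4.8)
The plan is to pass to the variables $(\mav,\delta\facall)$, which form an invertible linear transformation of $\facall$: each factor is reconstructed through $\fac=\ma\mathbf 1+\delta\fac$, and the only redundant coordinate, $\delta\fac[0]$, is identically zero. Consequently $\{(\mav,\delta\facall)(t)\}$ is automatically a Markov chain, and the theorem is equivalent to the statement that the one-sweep transition kernel of the sampler, written on the product of the space of averages $\mav$ and the space of increments $\delta\facall$, factorises as $P_1(\mathrm d\mav'\mid\mav)\,P_2(\mathrm d\delta\facall'\mid\delta\facall)$. Indeed, the first factor being a kernel in $\mav$ alone shows that $\{\mav(t)\}$ evolves autonomously and is thus a Markov chain, and symmetrically for $\{\delta\facall(t)\}$; moreover a product kernel preserves independence of the two blocks at each step, so from any product initial law --- in particular from the posterior, which factorises by Proposition~\ref{prop:b_l} --- the two processes are independent.

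I would establish the factorisation one elementary block update at a time. Consider the update of a factor block $\fac$ with $k\ge1$. Under balanced levels $\datc[k]_j=N/I_k$ for every $j$, so in \eqref{eq:fconda} the shrinkage coefficient $c_k$ and the conditional variance no longer depend on $j$, and the new levels can be written as $\fac_j=c_k(\wmy^{(k)}_j-\fac[0]-B_j)+\epsilon_j$ with $\epsilon_j$ i.i.d.\ Gaussian and $B_j=(\datc[k]_j)^{-1}\sum_{l\ne k,\,l\ne0}\sum_i\fac[l]_i\datc[k,l]_{j,i}$. Averaging over $j$, and using the identities $\sum_j\datc[k,l]_{j,i}=\datc[l]_i=N/I_l$ and $\tfrac1{I_k}\sum_j\wmy^{(k)}_j=\wmy$ that hold under balanced levels, yields $\ma=(\text{a function of }\mav\text{ and }y)+\bar\epsilon$ --- precisely \eqref{eq:fcondma}. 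Subtracting this average yields $\delta\fac_j=(\text{a function of }\delta\fac[-k]\text{ and }y)+(\epsilon_j-\bar\epsilon)$: the $j$-independent term $-\fac[0]$ cancels, and substituting $\fac[l]_i=\ma[l]+\delta\fac[l]_i$ into $B_j$ shows that the part carrying the averages equals $\sum_{l\ne k,\,l\ne0}\ma[l]$ (using $\sum_i\datc[k,l]_{j,i}=\datc[k]_j$), hence is also $j$-independent and cancels, leaving a mean-zero function of $\delta\fac[-k]$ alone. The probabilistic crux is that for i.i.d.\ Gaussians the sample mean $\bar\epsilon$ and the residuals $(\epsilon_j-\bar\epsilon)_j$ are independent, so the fresh randomness entering $\ma$ is independent of that entering $\delta\fac$. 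The update of $\fac[0]$ changes only $\ma[0]$, since $\delta\fac[0]\equiv0$: in the Gibbs sampler its mean $\wmy-\sum_k\ma[k]$ from \eqref{eq:fcondmu-bl} is a function of $\mav$ with deterministic variance; in the collapsed sampler balanced levels make $s^{(k)}_j$ constant in $j$ in Proposition~\ref{prop:marginal_law}, so the weighted average over $j$ collapses and the mean again reduces to $\wmy-\sum_{l\ne k,\,l\ne0}\ma[l]$, a function of $\mav$, with deterministic variance.

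Composing these updates over a full sweep, one carries inductively the invariant that at each stage the $\mav$-coordinates have been updated using only $\mav$ plus an independent noise batch and the $\delta\facall$-coordinates using only $\delta\facall$ plus another independent noise batch; some care is needed about the sequential nature of the sweep, since when $\fac$ is updated the factors $1,\dots,k-1$ already sit in their new state while $k+1,\dots,K$ sit in the old one, but in either case only the relevant half of each is read. This delivers the factorisation $P_1\otimes P_2$ of the one-sweep kernel and hence the theorem. I expect the main obstacle to be exactly the bookkeeping in the previous paragraph: verifying cleanly that, under balanced levels, the drift of $\ma$ carries no dependence on $\delta\facall$ and the drift of $\delta\fac$ none on $\mav$. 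That is where the one- and two-way margin identities for $\datc[k,l]_{j,i}$ are used, and it is also precisely the step that breaks when the design is not balanced at the level of factors --- consistently with the hypothesis of the theorem and with the merely approximate decomposition seen in Figure~\ref{fig:two_fac_scaling}.
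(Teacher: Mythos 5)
Your proposal is correct and follows essentially the same route as the paper: both pass to the averages/increments decomposition, establish that each block update kernel factorises into a kernel acting on $\mav$ alone times a kernel acting on $\delta \facall$ alone, and compose these product kernels over a full sweep. The only difference is that you verify the block-level factorisation by explicit computation (constant shrinkage coefficient under balanced levels, the one- and two-way margin identities, and independence of the Gaussian sample mean and residuals), where the paper invokes Proposition \ref{prop:b_l} together with the equivalence of local and global Markov properties.
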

A crucial point here is that the fact that the posterior distribution of $\mav$ and $\delta \facall$ factorise for balanced levels, as shown in Proposition \ref{prop:b_l}, does not  imply that the corresponding chains $\{\mav(t)\}$ and $\{\delta \facall(t)\}$ are independent of each other. The following very simple example that makes this point clear. Consider a Gibbs sampler that targets a bivariate Gaussian for $(x,y)$ with correlation $\rho$ and standard Gaussian marginals.  
Then the transformation $x$ and $z=y-\rho x$ orthogonalises the target, but the  corresponding stochastic processes $\{x(t)\}$ and $\{z(t)\}$ obtained by timewise transformation of the original chain $\{(x,y)(t)\}$ are not independent Markov chains, see, e.g., the cross-correlogram in  Figure \ref{fig:cross_corr}.
\begin{figure}[h!]
\centering
\includegraphics[width=0.5\linewidth]{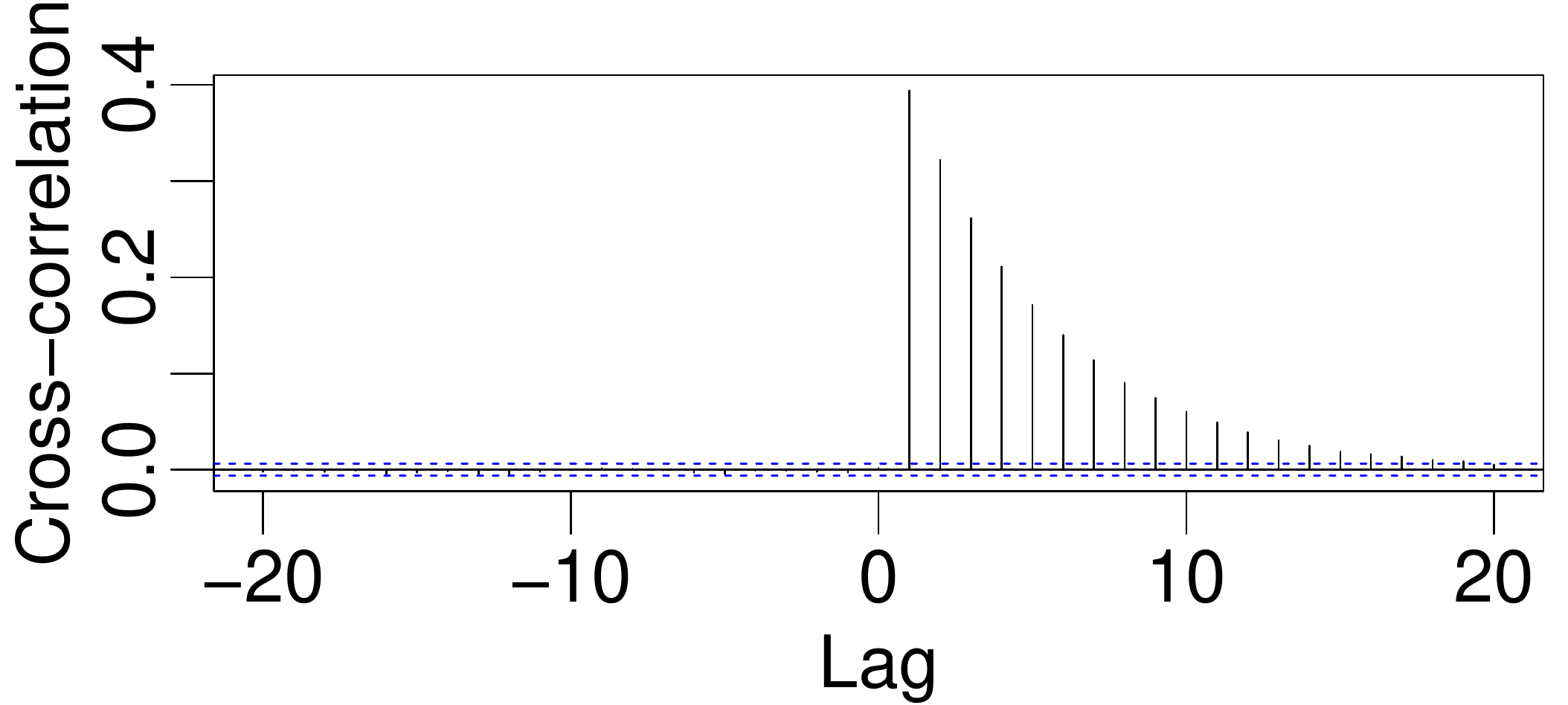}
\caption{Cross correlation between $\{x(t)\}$ and $\{z(t)\}$, where $z(t)=y(t)-\rho x(t)$, and $\{(x,y)(t)\}$ is the Gibbs sampler Markov chain on a bivariate Gaussian with correlation $\rho=0.9$.}\label{fig:cross_corr}
\end{figure}
Although this is a toy example, there are many instances where an independence factorisation of the target distribution does not imply that of the MCMC algorithm adopted (for example in  Hamiltonian Monte Carlo, population Markov chain Monte Carlo and piecewise deterministic Monte Carlo algorithms such as the zig-zag and bouncy particle sampler). 
There are subtle and deep reasons why the factorisation in Proposition \ref{prop:b_l} extends to the independence of the Markov chains obtained as timewise transformations.

In Section \ref{s:complexity-Gibbs} we use Theorem \ref{th:multigrid1} in conjunction with two others to characterise the complexity of the two samplers. The first of the additional results is about convergence rates of Markov chains, and shows how to relate the rate of convergence of $\{\facall(t)\}$ to that of the timewise transformations $\{\mav(t)\}$ and $\{\delta \facall(t)\}$. The second is about the rate of convergence of each of the Markov chains $\{\mav(t)\}$ and $\{\delta \facall(t)\}$.    

\section{Complexity analysis}
\label{s:complexity-Gibbs}

\subsection{Complexity of Markov chain Monte Carlo}
\label{complexity}

In this article we focus on $L^2(\pi)$ convergence,
which relies on functional analytic concepts, a very high level description of which are given below. For a given target distribution $\pi$ defined on a state space $\mathcal{X}$, we define $L^2(\pi)$ to be the space of complex-valued functions that are square-integrable with respect to $\pi$. We define the inner product in this space such that the associated norm of a function $f: \mathcal{X} \to R$ is $\|f\|^2=\int_{\mathcal{X}} f(x)^2 \pi(dx)$. For a Markov chain  $\{x(t)\}$ defined on $\mathcal{X}$ with transition kernel $P$ that is invariant with respect to $\pi$, we view $P$ as an integral operator on $L^2(\pi)$, and we say that it converges geometrically fast to $\pi$ in $L^2(\pi)$ norm (also known as operator norm), if and only if the second largest in absolute value eigenvalue of $P$, known as its geometric rate of convergence, is less than 1. The spectral gap of $P$ 
is defined as the difference between 1 and the rate of convergence, hence a Markov chain converges in $L^2(\pi)$ norm if and only it has positive spectral gap.  All this is fairly standard functional analysis theory applied to Markov chains on general state spaces. 

For our purposes, we define the mixing time of a Markov chain to be the inverse of its spectral gap; this can be interpreted as the number of iterations needed to subsample the Markov chain so that the resultant draws are roughly independent of each other. The complexity of a Markov chain Monte Carlo algorithm can be defined as the product of the mixing time and the cost per iteration.


\subsection{Timewise transformations and convergence of Markov chains}

The multigrid decomposition in Theorem \ref{th:multigrid1} identifies two timewise transformations of the Markov chain $\{\facall(t)\}$ produced by either of the algorithms considered in this article, each of which evolves independently of each other as a Markov chain. We can relate the rate of convergence of the Markov chains involved in this decomposition using the following two technical lemmata that are proved in the Appendix.

\begin{lemma}\label{lemma:transformation_MC}
Let $\{x(t)\}$ be a Markov chain with invariant distribution $\pi$ and $\{y(t)\}$ be a timewise transformation given by $y(t)=\phi(x(t))$, where $\phi$ is an injective function.
Then $\{y(t)\}$ is a Markov chain with the same rate of convergence as $\{x(t)\}$.
\end{lemma}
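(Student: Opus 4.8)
The plan is to promote the injectivity of $\phi$ to a \emph{unitary equivalence} between the two transition operators, so that they automatically share the same spectrum and hence the same $L^2$ rate of convergence. First I would check that $\{y(t)\}$ is a Markov chain with a well-defined invariant law. Since $\phi$ is injective we may recover $x(t)=\phi^{-1}(y(t))$, so the $\sigma$-algebra generated by $y(1),\dots,y(t)$ coincides with that generated by $x(1),\dots,x(t)$; feeding this into the Markov property of $\{x(t)\}$ shows that $\law(y(t+1)\mid y(1),\dots,y(t))$ is a function of $x(t)$ alone, hence of $y(t)$ alone, which is the Markov property for $\{y(t)\}$. Its invariant distribution is the pushforward $\pi_\phi:=\pi\circ\phi^{-1}$.

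The core object is the map $U\colon L^2(\pi_\phi)\to L^2(\pi)$ defined by $Uf:=f\circ\phi$. A change of variables gives $\|Uf\|_{L^2(\pi)}=\|f\|_{L^2(\pi_\phi)}$ and $\int Uf\,d\pi=\int f\,d\pi_\phi$, so $U$ is a linear isometry that maps constants to constants and preserves means; it is onto because $U(g\circ\phi^{-1})=g$ for every $g\in L^2(\pi)$. Hence $U$ is unitary and restricts to a unitary isomorphism between the mean-zero subspaces $L^2_0(\pi_\phi)$ and $L^2_0(\pi)$. I would then verify the intertwining identity $Q=U^{-1}PU$, where $P$ and $Q$ denote the transition operators of $\{x(t)\}$ and $\{y(t)\}$ on $L^2(\pi)$ and $L^2(\pi_\phi)$ respectively: unwinding definitions, for $y$ in the range of $\phi$, $(U^{-1}PUf)(y)=\mathbb{E}\{(Uf)(x(t+1))\mid x(t)=\phi^{-1}(y)\}=\mathbb{E}\{f(y(t+1))\mid y(t)=y\}=(Qf)(y)$, where the middle equality uses $\{x(t)=\phi^{-1}(y)\}=\{y(t)=y\}$, again by injectivity. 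Unitary equivalence of $P$ and $Q$, hence of their restrictions to the mean-zero subspaces, is then immediate, and equal spectra force equal operator norms on $L^2_0$, i.e.\ equal geometric rates of convergence in the $L^2$ sense used in this paper.

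The only genuine obstacle is measure-theoretic bookkeeping: one must know that $\phi^{-1}$ is measurable on the range of $\phi$, so that $U$ is truly surjective and $\pi_\phi$ is a probability measure on the appropriate $\sigma$-algebra. On standard Borel state spaces this is precisely the Lusin--Souslin theorem (an injective Borel image of a Borel set is Borel, and the inverse is Borel); in the Gaussian and Euclidean settings relevant here the transformations $\phi$ produced by the multigrid decomposition of Theorem \ref{th:multigrid1} are in fact linear bijections, so the point is moot. It is worth emphasising that injectivity cannot be dropped: lumping states of a Markov chain through a non-injective functional generally destroys the Markov property, and even when it does not the rate may change, which is why the statement is phrased for injective $\phi$ only.
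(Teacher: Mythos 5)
Your proposal is correct and follows essentially the same route as the paper's proof: both establish Markovianity via the coincidence of the $\sigma$-algebras generated by $x(t)$ and $y(t)$, and both deduce equality of convergence rates from the intertwining relation $Q = U^{-1}PU$ with $U\colon f\mapsto f\circ\phi$ an isomorphism between $L^2(\pi\circ\phi^{-1})$ and $L^2(\pi)$ (the paper writes this as $Q^t(\psi(f))=\psi(P^t(f))$ with $\psi=U^{-1}$). Your additional care about unitarity and the measurability of $\phi^{-1}$ is a welcome elaboration of details the paper leaves implicit, but it does not change the argument.
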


\begin{lemma}\label{lemma:independent_MC}
Let $\{x(t)\}$ be a Markov chain with state space $\mathcal{X}_1\times \mathcal{X}_2$ and target distribution $\pi_1\otimes\pi_2$. 
If the stochastic processes $\{x_1(t)\}$ and $\{x_2(t)\}$ obtained by projection on the $\mathcal{X}_1$ and $\mathcal{X}_2$ components are two independent Markov chains, then the rate of convergence of $\{x(t)\}$ equals the supremum between the rates of convergence of $\{x_1(t)\}$ and $\{x_2(t)\}$.
\end{lemma}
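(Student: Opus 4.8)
The plan is to reduce the statement to a tensor-product computation: first show that the transition kernel of $\{x(t)\}$ is the tensor product of the kernels of the two projected chains, and then read off the rate of convergence from the action of a tensor-product operator on $L^2(\pi_1\otimes\pi_2)$, using the orthogonal splitting of each factor into constants and mean-zero functions.

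\emph{Step 1: kernel factorisation.} Let $P$ be the transition kernel of $\{x(t)\}$ and $P_1,P_2$ those of $\{x_1(t)\}$ and $\{x_2(t)\}$. Running $\{x(t)\}$ from its invariant law $\pi_1\otimes\pi_2$, the hypothesis that $\{x_1(t)\}$ and $\{x_2(t)\}$ are \emph{independent} Markov chains forces $x_1(1)$ and $x_2(1)$ to be conditionally independent given $x(0)=(x_1,x_2)$, with conditional laws $P_1(x_1,\cdot)$ and $P_2(x_2,\cdot)$; hence $P\big((x_1,x_2),\cdot\big)=P_1(x_1,\cdot)\otimes P_2(x_2,\cdot)$ for $(\pi_1\otimes\pi_2)$-a.e.\ $(x_1,x_2)$. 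Equivalently, under the canonical identification $L^2(\pi_1\otimes\pi_2)=L^2(\pi_1)\otimes L^2(\pi_2)$, one checks on product functions $f_1\otimes f_2$ that $P(f_1\otimes f_2)=(P_1f_1)\otimes(P_2f_2)$, and extends this by linearity and density so that $P=P_1\otimes P_2$ as bounded operators.

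\emph{Step 2: spectral decomposition.} Write $L^2(\pi_i)=\mathbb{C}\mathbf 1\oplus L^2_0(\pi_i)$, the orthogonal splitting into constants and mean-zero functions, which $P_i$ preserves (it fixes constants, and by $\pi_i$-invariance maps $L^2_0(\pi_i)$ into itself). Let $\rho_i=\|P_i\|_{L^2_0(\pi_i)}$ be the rate of convergence of $\{x_i(t)\}$. Taking tensor products gives the $P$-invariant orthogonal decomposition
\begin{align*}
L^2(\pi_1\otimes\pi_2)=\mathbb{C}\mathbf 1\ \oplus\ L^2_0(\pi_1)\ \oplus\ L^2_0(\pi_2)\ \oplus\ \big(L^2_0(\pi_1)\otimes L^2_0(\pi_2)\big),
\end{align*}
on whose summands $P=P_1\otimes P_2$ acts, respectively, as the identity, as $P_1$ (so with norm $\rho_1$), as $P_2$ (norm $\rho_2$), and as $P_1\otimes P_2$ restricted to the mean-zero tensors (norm $\rho_1\rho_2$, using $\|A\otimes B\|=\|A\|\,\|B\|$ for Hilbert-space operators). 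Since the rate of convergence of $\{x(t)\}$ is the operator norm of $P$ on the orthogonal complement of the constants, it equals $\max\{\rho_1,\rho_2,\rho_1\rho_2\}=\max\{\rho_1,\rho_2\}$, because $\rho_1,\rho_2\le1$; this is the claimed supremum. (Here ``rate of convergence'' means the $L^2(\pi)$ operator norm of the restriction to mean-zero functions, which coincides with the second-largest eigenvalue in modulus whenever the latter is defined, and the same bookkeeping of eigenvalues yields the same conclusion.)

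\emph{Main obstacle.} The delicate point is Step 1: converting the probabilistic hypothesis into the operator identity $P=P_1\otimes P_2$, and then invoking correctly the functional-analytic facts about Hilbert tensor products ($L^2$ of a product measure is the Hilbert tensor product; the algebraic tensor of bounded operators extends uniquely to a bounded operator of norm equal to the product of the norms; restriction to a tensor product of invariant subspaces behaves as expected). Once these are in place, Step 2 is essentially bookkeeping over the four summands.
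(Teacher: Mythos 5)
Your proof is correct and follows essentially the same route as the paper's: both arguments hinge on first identifying the transition kernel as the tensor product $P_1\otimes P_2$ and then reading off the second-largest spectral value on the complement of the constants. The only difference is that the paper invokes the identity $\sigma(P_1\otimes P_2)=\sigma(P_1)\sigma(P_2)$ directly (citing Brown and Pearcy), whereas you carry out the equivalent bookkeeping by hand via the four-summand orthogonal decomposition; your closing parenthetical is the right caveat, since the paper defines the rate via eigenvalues rather than the operator norm on mean-zero functions, and the same decomposition tracked through spectra gives the identical conclusion.
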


Therefore, for balanced levels designs the rate of convergence of the Markov chain $\{\facall(t)\}$, generated either by the Gibbs sampler or the collapsed Gibbs sampler, is the larger of the rates of the two chains $\{\mav(t)\}$ and $\{\delta \facall(t)\}$. Each of these chains is amenable to analysis using the theory summarised in Section \ref{sec:spectral-Gauss} below. 

\subsection{The spectral gap of the Gibbs sampler on Gaussian
  distributions}
\label{sec:spectral-Gauss}
The Markov chain $\{x(t)\}$ generated by a Gibbs Sampler targeting a Gaussian multivariate distribution $N(\mu,\Sigma)$ is a Gaussian autoregressive process evolving as $x(t+1)\mid x(t)\sim N(Bx(t)+b,\Sigma-B\Sigma B^\T )$, see for example Lemma 1 in \cite{roberts1997updating}. The details of the Gibbs sampler (e.g., the order that its components are updated or blocked together) are reflected in the precise form of $B$.
There is a generic recipe how to obtain $B$ described in Lemma 1 in \cite{roberts1997updating}, but sometimes it is easier to work it out directly from first principles, as for example we do in Propositions \ref{prop:averages_rate} and \ref{prop:bl_two_fac} and  below. This representation implies that the rate of convergence of the  Gibbs sampler is $\rho(B)$, the largest absolute eigenvalue of the matrix $B$, see Theorem 1 of \cite{roberts1997updating}. This characterisation of the $L^2(\pi)$ rate of convergence is immensely useful and has provided invaluable insights into the performance of the Gibbs sampler and has lead to much more efficient modifications of the basic algorithm, see for example \cite{papa2003,papaspiliopoulos2007general}. However,  in high-dimensional scenarios it is often very challenging to compute $\rho(B)$ explicitly as a function of the important parameters of the model (e.g., $p$ and $N$ in the crossed effects models considered here). Hence as a tool for understanding the complexity of the Gibbs sampler in difficult problems this approach has limited scope. In this article we will make it useful by combining it with the  multigrid decomposition of Theorem \ref{th:multigrid1}, which 
collapses the problem to studying the spectral gaps of the Gaussian subchains $\{\mav(t)\}$ and $\{\delta \facall(t)\}$ that turn out to be amenable to direct analysis.

\subsection{Complexity analysis for balanced cells designs}
\label{sec:balanced-cells}


The most substantial result of this section is Proposition \ref{prop:averages_rate} below, which actually holds for  balanced levels designs too, hence used also in Section \ref{sec:balanced-levels}. It characterises the rate of convergence of one of the two timewise transformations involved in the multigrid decomposition.

\begin{proposition}\label{prop:averages_rate}
For balanced levels designs the rate of convergence of the Markov chain $\{\mav(t)\}$ defined in Theorem \ref{th:multigrid1} equals $\max_{k}\frac{N\tau_{0}}{N\tau_{0}+I_k\tau_{k}}$ for the Gibbs Sampler  and $0$ for the collapsed Gibbs Sampler, and this rate is the same for any order that the different blocks are updated. 
\end{proposition}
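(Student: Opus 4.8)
The plan is to show that, for either sampler, the transformed chain $\{\mav(t)\}$ from Theorem~\ref{th:multigrid1} is itself an ordinary Gibbs sampler on a $(K{+}1)$-variate Gaussian with a rank-one-plus-diagonal precision matrix, and then to read off its rate of convergence directly from that structure. Concretely, I would first note that by Theorem~\ref{th:multigrid1} the chain $\{\mav(t)\}$ is Markov, and that for the plain Gibbs sampler an update of block $\fac$ resamples $\fac$ from $\claw{\fac}{y,\fac[-k]}$, so the new average $\ma$ has, given $(y,\fac[-k])$, exactly the conditional law in \eqref{eq:fcondma}, which depends on the current state only through $\fac[0]$ and the other averages; an update of $\fac[0]$ uses \eqref{eq:fcondmu-bl}. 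Hence $\{\mav(t)\}$ evolves precisely as a deterministic-scan Gibbs sampler with scalar coordinates $(\fac[0],\ma[1],\dots,\ma[K])$ targeting the Gaussian whose full conditionals are \eqref{eq:fcondmu-bl}--\eqref{eq:fcondma}; reading off the conditional means and precisions gives the target precision matrix $Q=N\tau_0\,\mathbf 1\mathbf 1^\T+\mathrm{diag}(0,I_1\tau_1,\dots,I_K\tau_K)$, which is positive definite since each $I_k\tau_k>0$.

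For the collapsed sampler, integrating out $\fac[0]$ replaces $Q$ by its Schur complement with respect to the $\fac[0]$-block, which is the diagonal matrix $\mathrm{diag}(I_1\tau_1,\dots,I_K\tau_K)$ — the matrix form of \eqref{eq:collapsed_iid}. Consequently each joint update of $(\fac[0],\fac)$ redraws $\ma$ from $\claw{\ma}{y,\ma[-0,-k]}=\claw{\ma}{y}$ (combining Proposition~\ref{prop:marginal_law} with \eqref{eq:collapsed_iid}), i.e.\ it refreshes coordinate $k$ from its marginal independently of the past; after one scan over $k=1,\dots,K$ the chain $\{\mav(t)\}$ is i.i.d.\ from its target, so its autoregressive coefficient matrix is $0$ and its rate of convergence is $0$, for any update order.

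For the plain Gibbs sampler I would invoke the autoregressive representation of Section~\ref{sec:spectral-Gauss}: the rate equals $\rho(B)$, where $B$ is the Gauss--Seidel iteration matrix of $Q$ for the chosen scan, so $\det(\lambda I-B)=\det(\lambda D+\lambda L+U)/\det(D+L)$ for $Q=D+L+U$ (diagonal, strict lower, strict upper parts ordered by the scan). Since $L+U=N\tau_0(\mathbf 1\mathbf 1^\T-I)$ and $D=N\tau_0 I+\mathrm{diag}(0,I_1\tau_1,\dots)$, one gets $\lambda D+\lambda L+U=M_0+N\tau_0\,\mathbf 1\mathbf 1^\T$ with $M_0:=\lambda\,\mathrm{diag}(0,I_1\tau_1,\dots)+N\tau_0(\lambda-1)\tilde L$ and $\tilde L$ the lower-triangular all-ones matrix for that scan; $M_0$ is lower triangular with $\det M_0=N\tau_0(\lambda-1)\prod_{k=1}^K\{\lambda I_k\tau_k+N\tau_0(\lambda-1)\}$. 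Solving the triangular system $M_0 v=\mathbf 1$ for the partial sums of $v$ shows that they stabilise at $1/\{N\tau_0(\lambda-1)\}$ from the $\fac[0]$-coordinate onward, whence $\mathbf 1^\T M_0^{-1}\mathbf 1=1/\{N\tau_0(\lambda-1)\}$ irrespective of the scan order; the matrix determinant lemma then yields $\det(\lambda D+\lambda L+U)=N\tau_0\,\lambda\prod_{k=1}^K\{\lambda I_k\tau_k+N\tau_0(\lambda-1)\}$, and dividing by $\det(D+L)=\prod_iQ_{ii}=N\tau_0\prod_{k=1}^K(N\tau_0+I_k\tau_k)$ gives $\det(\lambda I-B)=\lambda\prod_{k=1}^K\{\lambda-N\tau_0/(N\tau_0+I_k\tau_k)\}$. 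Thus $\mathrm{spec}(B)=\{0\}\cup\{N\tau_0/(N\tau_0+I_k\tau_k):k=1,\dots,K\}$ and $\rho(B)=\max_kN\tau_0/(N\tau_0+I_k\tau_k)$ for every scan order, as claimed.

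The main obstacle is twofold. First, one has to justify carefully that $\{\mav(t)\}$ is \emph{exactly} the $(K{+}1)$-dimensional Gibbs sampler of the first step — that an update of $\fac$ induces the law \eqref{eq:fcondma} for $\ma$ and introduces no further dependence on the rest of the state — which is precisely where Theorem~\ref{th:multigrid1} and Proposition~\ref{prop:b_l} are needed. Second, and genuinely non-routine, is the evaluation $\mathbf 1^\T M_0^{-1}\mathbf 1=1/\{N\tau_0(\lambda-1)\}$ for an arbitrary lower-triangular factor $\tilde L$: this is what makes the characteristic polynomial, and hence $\rho(B)$, independent of the update order, in contrast to Gauss--Seidel iteration matrices in general. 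Everything else is bookkeeping.
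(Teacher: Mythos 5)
Your proposal is correct, and it reaches the same conclusion through a genuinely different computation of the spectrum. Both arguments begin identically: $\{\mav(t)\}$ is recognised, via Theorem~\ref{th:multigrid1} and \eqref{eq:fcondmu-bl}--\eqref{eq:fcondma}, as a deterministic-scan Gibbs sampler with $(K{+}1)$ scalar coordinates, and the collapsed case is dispatched in both by the independence \eqref{eq:collapsed_iid}, which makes the scan produce i.i.d.\ draws. Where you diverge is the plain Gibbs sampler: the paper computes the autoregressive matrix $B$ entrywise by an induction on the conditional expectations, obtaining a matrix with first row $(0,-1,\dots,-1)$ and a lower-triangular block with diagonal $(r_1,\dots,r_K)$, then exhibits explicit eigenvectors of $B$ for each $r_k$, and handles order-independence separately via invariance of the rate under cyclic permutations plus relabelling. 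You instead identify the target precision as the diagonal-plus-rank-one matrix $Q=N\tau_0\,\mathbf 1\mathbf 1^\T+\mathrm{diag}(0,I_1\tau_1,\dots,I_K\tau_K)$ and compute the characteristic polynomial of the Gauss--Seidel iteration matrix directly, $\det(\lambda I-B)=\det(\lambda D+\lambda L+U)/\det(D+L)$, using the matrix determinant lemma; the key evaluation $\mathbf 1^\T M_0^{-1}\mathbf 1=1/\{N\tau_0(\lambda-1)\}$ (which I have checked: the partial sums of the solution of the triangular system lock onto $1/\{N\tau_0(\lambda-1)\}$ from the $\fac[0]$-coordinate onward, wherever that coordinate sits in the scan) yields $\det(\lambda I-B)=\lambda\prod_{k}(\lambda-r_k)$ for every ordering, so order-independence is not a separate argument but a by-product. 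Your route buys a cleaner, scan-uniform derivation that exposes why the rank-one structure of the level-averages' posterior precision is the real reason the rate is $\max_k r_k$; the paper's route is more elementary and produces the eigenvectors explicitly, which is occasionally useful for identifying the slow directions. Two small points to tighten: you should state explicitly that the conditional in \eqref{eq:fcondma} coincides with the full conditional of $\ma$ under $\claw{\mav}{y}$ (this uses the independence of $\mav$ and $\delta\facall$ from Proposition~\ref{prop:b_l}, as you note), and the partial-sum recursion should be read as a polynomial identity in $\lambda$, valid off the finitely many $\lambda$ where a pivot vanishes and extended by continuity.
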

\begin{proof}
For the Gibbs Sampler, the subchain $\{\mav(t)\}$ is a Gaussian Gibbs Sampler, with $(K+1)$ one-dimensional components. We can explicitly work out that its autoregressive matrix $B$ takes the form
  \begin{equation}
B=
\begin{pmatrix}\label{eq:B_matrix_form}
0 & -1 & \dots & -1\\
0 &  &  & \\
\vdots &  & T &\\
0  &  &  & 
\end{pmatrix}
\end{equation}
where $T$ is a $K\times K$ lower triangular matrix with diagonal elements equal to $(r_1,\dots,r_K)$, with
\begin{align}\label{eq:ratios}
r_k
=
\frac{N\tau_{0}}{N\tau_{0}+I_k\tau_{k}}\,.
\end{align} 
We check \eqref{eq:B_matrix_form} verifying directly that $\mathbb{E}[\mav(t+1)\mid \mav(t)]=B\mav(t)+b$. 
From equation \eqref{eq:fcondmu-bl} we have $\mathbb{E}[\fac[0](t+1)\mid \mav(t)]=\wmy - \sum_k \ma(t)$, which implies that the first row of $B$ is as in \eqref{eq:B_matrix_form}.
To conclude the proof of \eqref{eq:B_matrix_form} we need to show that
\begin{align}\label{eq:induction}
\mathbb{E}\left[\ma[k](t+1)\mid \mav(t)\right]=r_k\ma[k]+\sum_{l=1}^{k-1}T_{kl}\ma[l]+b_k
\end{align}
for some $(T_{kl})_{l<k}$ and $(b_k)_k$.
Using \eqref{eq:fcondma}, we have
\begin{align*}
\mathbb{E}&\left[\bar{a}^{(k)}(t+1)\mid \mav(t)\right]
\\=&
\mathbb{E}\left[\mathbb{E}\left[\bar{a}^{(k)}(t+1)\mid \fac[0](t+1),\ma[1](t+1),\dots,\ma[k-1](t+1),\ma[k+1](t),\dots,\ma[K](t)\right]\mid \mav(t)\right]=
\\=
&r_k\left(\wmy-\mathbb{E}\left[\fac[0](t+1)\mid \mav(t)\right]
- \sum_{l=1}^{k-1}\mathbb{E}\left[\bar{a}^{(l)}(t+1)\mid \mav(t)\right]
- \sum_{l=k+1}^K\bar{a}^{(l)}(t)
\right)
\\=
&r_k\left(\sum_{s=1}^k\ma[s](t)
- \sum_{l=1}^{k-1}\mathbb{E}\left[\ma[l](t+1)\mid \mav(t)\right]
\right)
\,.
\end{align*}
When $k=1$ the latter implies $\mathbb{E}[\ma[1](t+1)\mid \mav(t)]=r_1\ma[1]$, meaning that \eqref{eq:induction} holds for $k=1$.
By induction we have that \eqref{eq:induction} holds for all $k=1,\dots,K$.
In fact if \eqref{eq:induction} holds for $1$ up to $k-1$, we have 
$$
\mathbb{E}\left[\bar{a}^{(k)}(t+1)\mid \mav(t)\right]
=
r_k\left(\sum_{s=1}^k\ma[s](t)
- \sum_{l=1}^{k-1}r_l\ma[l]+\sum_{s=1}^{l-1}T_{ls}\ma[s]+b_l
\right)\,,
$$
meaning that \eqref{eq:induction} holds also for $k$. Therefore $B$ has a form as in \eqref{eq:B_matrix_form}.

Since $T$ is a lower triangular matrix its spectrum coincides with its diagonal elements $(r_1,\dots,r_K)$.
For each $k=1,\dots,K$, let $v^{(k)}$ be the eigenvector with eigenvalue $r_k$.
It is easy to check that the $(K+1)$-dimensional vector $w^{(k)}=(-r_k^{-1}\sum_{\ell=1}^K v^{(k)}_\ell,v^{(k)}_1,\dots,v^{(k)}_K)$ is an eigenvector of B with eigenvalue $r_k$.
Thus $(r_1,\dots,r_k)$ are also eigenvalues of $B$.
Finally note that $(1,0,\dots,0)$ is an eigenvector of $B$ with eigenvalue 0. With these ingredients the proof of the claim for the Gibbs sampler follows immediately. 

For the collapsed Gibbs Sampler, $\mav(t)$ is obtained from $\mav(t-1)$ by simulating  $\ma[k](t)$ from 
  \begin{align*}
    & \claw{\ma(t)}{y,\ma[1](t),\ldots,\ma[k-1](t),\ma[k+1](t-1),\ldots,\ma[K](t-1)}\,,
  \end{align*}
  for $k=1,\dots,K$.
  By Proposition \ref{prop:b_l}, the latter procedure produces independent and identically distributed draws from $\claw{\ma[-0]}{y}$, or equivalently $\claw{\mav}{y}$ if $\ma[0]$ is jointly updated with $\ma[k]$.

These rates do not depend on the order that the different components are updated. This is trivially true for the collapsed Gibbs since the components are independent. For the Gibbs sampler the argument is as follows. 
The Gibbs Sampler rate of convergence is invariant with respect to cyclic permutations of the order of update of the components, see e.g.\ \citet[p.297]{roberts1997updating}.
Thus we can always assume $\fac[0]$ to be the first component to be updated.
Then the result follows by relabeling the components $\fac[1]$ to $\fac[K]$ according to their update order and replicating the argument developed in the previous paragraphs. 
\end{proof}

The main result of this section follows rather easily from Proposition \ref{prop:averages_rate}. 
\begin{theorem}\label{thm:rate_gibbs_bc}
For balanced cells designs, the mixing time of the Gibbs Sampler is $1+\max_{k=1,\dots,K}\frac{N\tau_0}{I_k\tau_k}$, and that of the collapsed Gibbs Sampler is 1, 
i.e., it produces independent and identically distributed draws from the target, and these rates do not depend on the order that different components are updated.
\end{theorem}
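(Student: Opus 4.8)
The plan is to deduce the result directly from the multigrid decomposition of Theorem~\ref{th:multigrid1} together with Proposition~\ref{prop:averages_rate}, by separately pinning down the rate of convergence of the second subchain $\{\delta \facall(t)\}$ and then converting rates into mixing times. Since a balanced cells design is in particular a balanced levels design, Theorem~\ref{th:multigrid1} applies to both samplers, producing the two independent Markov chains $\{\mav(t)\}$ and $\{\delta \facall(t)\}$ obtained as timewise transformations of $\{\facall(t)\}$. Because the map $\facall\mapsto(\mav,\delta\facall)$ is injective, Lemma~\ref{lemma:transformation_MC} shows that $\{(\mav,\delta\facall)(t)\}$ has the same rate of convergence as $\{\facall(t)\}$, and Lemma~\ref{lemma:independent_MC} identifies this common rate with the maximum of the rates of $\{\mav(t)\}$ and $\{\delta\facall(t)\}$. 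So the whole problem reduces to the two subchains.

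The rate of $\{\mav(t)\}$ is already supplied by Proposition~\ref{prop:averages_rate}: it equals $\max_k \frac{N\tau_0}{N\tau_0+I_k\tau_k}$ for the Gibbs sampler and $0$ for the collapsed Gibbs sampler, for any update order. For $\{\delta\facall(t)\}$ I would exploit the balanced cells structure: the computation behind Proposition~\ref{prop:b_l} shows that $\claw{\delta\fac_j}{y,\fac[-k],\delta\fac[-k]} = N\{0,(I_k-1)(N\tau_0+I_k\tau_k)^{-1}\}$, and more generally that the whole block $\delta\fac$ is, conditionally on everything else, drawn from a distribution that does not involve any of the conditioning variables (it is the projection onto the zero-sum subspace of an exchangeable Gaussian vector with a common, state-dependent mean, so the mean drops out). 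Consequently, in one sweep of either sampler each $\delta\fac$ is refreshed by an independent draw and is never altered thereafter, so the Gaussian autoregressive chain $\{\delta\facall(t)\}$ has autoregressive matrix equal to $0$ and hence rate of convergence $0$, again irrespective of the update order; for the collapsed sampler one only needs to observe that integrating out $\fac[0]$, which perturbs $\fac[k]$ only along the all-ones direction, leaves the law of the increments $\delta\fac$ unchanged.

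Assembling these pieces through Lemma~\ref{lemma:independent_MC}, the rate of convergence of $\{\facall(t)\}$ is $\rho = \max_k \frac{N\tau_0}{N\tau_0+I_k\tau_k}$ for the Gibbs sampler and $\rho = 0$ for the collapsed Gibbs sampler. By the definition of mixing time in Section~\ref{complexity} as $(1-\rho)^{-1}$, the collapsed Gibbs sampler has mixing time $1$ (equivalently, by \eqref{eq:collapsed_iid} it returns i.i.d.\ draws), while for the Gibbs sampler
\[
(1-\rho)^{-1} = \Bigl(\min_k \tfrac{I_k\tau_k}{N\tau_0+I_k\tau_k}\Bigr)^{-1} = \max_k \frac{N\tau_0+I_k\tau_k}{I_k\tau_k} = 1 + \max_k \frac{N\tau_0}{I_k\tau_k}.
\]
Order-independence is inherited from Proposition~\ref{prop:averages_rate} and from the fact that the $\delta\facall$ updates are i.i.d.\ regardless of order.

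I expect the only real work to be in the middle step, namely certifying that the $\{\delta\facall(t)\}$ subchain genuinely produces independent draws, i.e.\ has rate exactly $0$. This relies on the block full conditionals of $\delta\fac$ being completely free of the conditioning variables, which is special to balanced cells (under mere balanced levels the increments still factor across $k$ but their block conditionals need not be state-free), and on checking that the $\fac[0]$-collapse does not reintroduce such a dependence. Everything else is bookkeeping with the already-established lemmata plus the scalar manipulation turning $\max_k \frac{N\tau_0}{N\tau_0+I_k\tau_k}$ into $1+\max_k\frac{N\tau_0}{I_k\tau_k}$.
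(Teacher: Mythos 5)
Your proof is correct and follows essentially the same route as the paper's: reduce to the two subchains via Theorem~\ref{th:multigrid1} with Lemmas~\ref{lemma:transformation_MC} and~\ref{lemma:independent_MC}, take the rate of $\{\mav(t)\}$ from Proposition~\ref{prop:averages_rate}, argue that $\{\delta\facall(t)\}$ has rate $0$ under balanced cells, and invert the spectral gap. Your justification that the residual chain yields independent draws (via the state-free block conditionals of $\delta\fac$) is actually spelled out more carefully than the paper's one-line appeal to Proposition~\ref{prop:b_l}, which is a welcome addition given the paper's own warning that posterior factorisation alone does not imply chain independence.
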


\begin{proof}[of Theorem \ref{thm:rate_gibbs_bc}]
Let $\{\facall(t)\}$ be the Markov chain generated by the Gibbs Sampler or its collapsed version.
Lemma \ref{lemma:transformation_MC} implies that $\{(\mav,\delta \facall)(t)\}$ is a Markov chain with the same rate of convergence as $\{\facall(t)\}$.
Thus, by means of Theorem \ref{th:multigrid1} and Lemma \ref{lemma:independent_MC}, the rate of convergence of  $\{\facall(t)\}$ equals the maximum between the rate of convergence of $\{\mav(t)\}$ and the one of  $\{\delta\facall(t)\}$.
Proposition \ref{prop:b_l} implies that $\{\delta\facall(t)\}$ performs independent sampling from $\claw{\delta \facall}{y}$ and thus its rate of convergence is 0 and the rate of convergence of  $\{\facall(t)\}$ equals the one of  $\{\mav(t)\}$. 
To conclude, Proposition \ref{prop:averages_rate} and the definition of mixing times as inverse of the spectral gap imply the statement to be proved. 
\end{proof}

The theorem completely characterises the mixing time of the Gibbs sampler and the collapsed Gibbs sampler  for balanced cells designs. Considering the computational cost of the algorithms, we find that each of the algorithms requires an $\mathcal{O}(N)$ computation at initialisation to precompute data averages. In the Appendix we show that both algorithms have the same cost per iteration, which is proportional to the number of parameters, $p$. Therefore, the collapsed Gibbs sampler is  an $\mathcal{O}(p)$ implementation of exact sampling from the posterior.

We now consider asymptotic regimes.  The more classical asymptotic regime, which we will refer to as infill asymptotics, keeps the number of factors and levels fixed, hence 
$K$ and $p$ fixed, and increases the number of observations per cell, hence $N$ grows. The other more modern asymptotic regime, which we will refer to as outfill asymptotics, increases $p$ with $N$, e.g. considering the observations per cell bounded and increasing the number or levels and/or factors. It is this type of asymptotic that it is more interesting in recommendation applications.

Regardless of the asymptotic regime considered the mixing time of the collapsed Gibbs sampler is $\mathcal{O}(1)$. On the other hand, that of the Gibbs sampler depends on the regime considered. In infill asymptotics Theorem \ref{thm:rate_gibbs_bc} implies that the mixing time of the algorithm is $\mathcal{O}(N)$. An intuition for this deterioration of the algorithm with increasing data size can be obtained by considering the analysis of non-centered parameterisations for hierarchical models in  Section 2 of \citet{papaspiliopoulos2007general}; the parameterisation of the crossed effect model is non-centred and the infill asymptotics regime makes the data increasingly informative per random effect, hence we should anticipate the deterioration. Therefore, in this regime the complexity of both algorithms is $\mathcal{O}(N)$ but in practice the collapsed will be much more efficient. In outfill asymptotics, both $N$ and the number of factor levels $I_k$'s are growing, hence by Theorem \ref{thm:rate_gibbs_bc} the mixing time of the Gibbs sampler is no worse than $\mathcal{O}(N)$ but no better than  $\mathcal{O}(N^{1-1/K})$. The lower bound on the mixing time  can be deduced from the balanced cells design assumption, which implies $\prod_{k=1}^KI_k\leq N$ and $\min_{k}I_k\leq N^{1/K}$; the bound is achievable when $I_1=\dots=I_K$. On the other hand, the number of parameters can grow as different powers of $N$. For example, if the number of levels for all but one factor are fixed and those of the remaining factor are increasing (e.g., fixed number of customers, increasing number of products) then $p$ is $\mathcal{O}(N)$ and the mixing time of the Gibbs sampler is also $\mathcal{O}(N)$, resulting in a Gibbs Sampler complexity of $\mathcal{O}(N^{2})$, whereas the collapsed Gibbs sampler is $\mathcal{O}(N)$.



\subsection{Complexity analysis for balanced levels designs}
\label{sec:balanced-levels}

The strategy for obtaining complexity results for balanced levels designs is the same as for balanced cells and Proposition \ref{prop:averages_rate} is as instrumental. However, in this case the analysis is much more complicated since the second timewise transformation, $\{\delta \facall(t)\}$, does not sample anymore independently from its invariant distribution; in fact its invariant distribution does not factorise as in the case of balanced cells. On the other hand, Lemma \ref{lemma:independent_MC} and Proposition \ref{prop:averages_rate} imply  immediately lower bound on the mixing time of the Gibbs sampler.
\begin{theorem}\label{thm:rate_gibbs_bl}
For balanced levels designs, the mixing time of the Gibbs Sampler is at least $1+\max_{k=1,\dots,K}\frac{N\tau_0}{I_k\tau_k}$. 
\end{theorem}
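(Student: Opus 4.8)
The plan is to reuse the chain of reasoning behind Theorem \ref{thm:rate_gibbs_bc}, retaining only the inequality that survives once $\{\delta\facall(t)\}$ is no longer an i.i.d.\ sampler. First I would apply Lemma \ref{lemma:transformation_MC} to the injective timewise transformation $\facall\mapsto(\mav,\delta\facall)$, so that $\{(\mav,\delta\facall)(t)\}$ is a Markov chain with exactly the same rate of convergence as $\{\facall(t)\}$. Theorem \ref{th:multigrid1} then tells us that, on balanced levels designs, the coordinate processes $\{\mav(t)\}$ and $\{\delta\facall(t)\}$ are independent Markov chains, so Lemma \ref{lemma:independent_MC} identifies the rate of convergence of $\{\facall(t)\}$ with the maximum of the two individual rates; in particular it is at least the rate of $\{\mav(t)\}$.

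Next I would substitute the value of that rate from Proposition \ref{prop:averages_rate}, namely $\max_{k}\frac{N\tau_0}{N\tau_0+I_k\tau_k}$ for the Gibbs sampler, and independent of the update order. Consequently the spectral gap of $\{\facall(t)\}$ is at most $1-\max_{k}\frac{N\tau_0}{N\tau_0+I_k\tau_k}=\min_{k}\frac{I_k\tau_k}{N\tau_0+I_k\tau_k}$, where the last equality uses that $x\mapsto x/(N\tau_0+x)$ is increasing. Taking reciprocals according to the definition of mixing time in Section \ref{complexity} yields a mixing time of at least $\max_{k}\frac{N\tau_0+I_k\tau_k}{I_k\tau_k}=1+\max_{k}\frac{N\tau_0}{I_k\tau_k}$, which is the asserted bound.

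I do not expect a genuine obstacle, since the structural content has been front-loaded into Theorem \ref{th:multigrid1}, Lemmata \ref{lemma:transformation_MC}--\ref{lemma:independent_MC} and Proposition \ref{prop:averages_rate}. The only point requiring care is precisely what forces the statement to be one-sided: unlike the balanced cells case, Proposition \ref{prop:b_l} no longer makes $\claw{\delta\facall}{y}$ factorise across factors, so we cannot claim that $\{\delta\facall(t)\}$ has rate $0$, and in principle its own rate could dominate that of $\{\mav(t)\}$. Thus the bound is exact exactly when the increments subchain mixes no slower than $\{\mav(t)\}$; deciding whether that holds is left to the more delicate spectral analysis of $\{\delta\facall(t)\}$ carried out afterwards.
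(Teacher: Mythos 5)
Your proposal is correct and follows exactly the route the paper intends: the multigrid decomposition (Theorem \ref{th:multigrid1}) combined with Lemmata \ref{lemma:transformation_MC} and \ref{lemma:independent_MC} identifies the rate of $\{\facall(t)\}$ as the maximum of the rates of the two subchains, and Proposition \ref{prop:averages_rate} then lower-bounds it by $\max_k N\tau_0/(N\tau_0+I_k\tau_k)$, from which the stated mixing-time bound follows by inverting the spectral gap. Your closing remark on why the bound is only one-sided — that $\claw{\delta\facall}{y}$ no longer factorises and the increments chain could in principle be the slower one — is precisely the reason the paper defers the matching upper bound to the $K=2$ analysis.
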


From Proposition \ref{prop:averages_rate} we also know that the rate of the blocked Gibbs sampler is that of $\{\delta \facall(t)\}$. Therefore, obtaining explicit rates of convergence for  $\{\delta \facall(t)\}$ is the step needed for characterising the mixing time of both algorithms in balanced levels designs. We are able to do this for $K=2$ in Proposition \ref{prop:bl_two_fac} below. Our theory is based on an  auxiliary process $\{i(t)\}$ with discrete state space $\{1,\dots,I_1\}\times \{1,\dots,I_2\}$ that evolves according to a two component Gibbs Sampler, iteratively updating $i_1\mid i_2$ and $i_2\mid i_1$, with invariant distribution $p(i_1,i_2)= n_{i_1i_2}/N$. 

\begin{proposition}\label{prop:bl_two_fac}
For balanced levels designs with $K=2$, the rate of convergence of the Markov chain $\{\delta \facall(t)\}$ is
\begin{align*}
\frac{N\tau_{0}}{N\tau_{0}+I_1\tau_{1}}
\frac{N\tau_{0}}{N\tau_{0}+I_2\tau_{2}}
\rho_{aux}\,,
\end{align*}
where $\rho_{aux}$ is the rate of convergence of the auxiliary Gibbs sampler $\{i(t)\}$.
\end{proposition}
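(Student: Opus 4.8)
The plan is to mirror the proof of Proposition~\ref{prop:averages_rate}. By Theorem~\ref{th:multigrid1} the process $\{\delta\facall(t)\}$ is itself a Markov chain, and since its invariant law $\claw{\delta\facall}{y}$ is Gaussian it is a Gaussian autoregression $\delta\facall(t+1)\mid\delta\facall(t)\sim N(B\,\delta\facall(t)+b,\,\Sigma-B\Sigma B^{\T})$ of the type recalled in Section~\ref{sec:spectral-Gauss}, so by Theorem~1 of \cite{roberts1997updating} its rate of convergence is $\rho(B)$. The entire content of the proposition is that for $K=2$ this matrix $B$ can be written through the transition kernels of the auxiliary chain $\{i(t)\}$; hence the task is to compute $B$ explicitly from the full conditionals \eqref{eq:fconda} and then read off $\rho(B)$.

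First I would compute the block updates in centred coordinates. Substituting $\fac[2]_i=\ma[2]+\delta\fac[2]_i$ into the weighted sum in \eqref{eq:fconda} with $K=2$, the contributions of $\ma[2]$ and of $\fac[0]$ merge into an additive constant, and the balanced-levels assumption turns the shrinkage factor into $r_k=N\tau_0/(N\tau_0+I_k\tau_k)$ as in \eqref{eq:fcondma}. Passing from $\fac[1]$ to $\delta\fac[1]=\fac[1]-\ma[1]$, and similarly for factor $2$, then yields, for constant vectors $c_1,c_2$,
\begin{align*}
\mathbb{E}[\delta\fac[1](t+1)\mid\delta\facall(t)]&=c_1-r_1M_1\,\delta\fac[2](t),\\
\mathbb{E}[\delta\fac[2](t+1)\mid\delta\fac[1](t+1)]&=c_2-r_2M_2\,\delta\fac[1](t+1),
\end{align*}
where $M_1$ is the $I_1\times I_2$ matrix with entries $\datc[1,2]_{j,i}/\datc[1]_j$ and $M_2$ the $I_2\times I_1$ matrix with entries $\datc[1,2]_{j,i}/\datc[2]_i$. (The centring also produces rank-one corrections, but they act only through $\sum_i\delta\fac[2]_i=0$ and $\sum_j\delta\fac[1]_j=0$ and hence vanish on the zero-sum subspace on which $\delta\facall$ lives, to which we restrict throughout.) Because the design has balanced levels, $M_1$ and $M_2$ are exactly the stochastic matrices of the conditional laws $i_2\mid i_1$ and $i_1\mid i_2$ under $p(i_1,i_2)=n_{i_1i_2}/N$, that is, the ingredients of the auxiliary Gibbs sampler. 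Composing the two displays shows that the $\delta\fac[2]$-marginal of $\{\delta\facall(t)\}$ is a Gaussian autoregression with matrix $r_1r_2M_2M_1$ and that $B$ is block upper triangular with diagonal blocks $0$ and $r_1r_2M_2M_1$; hence $\rho(B)=r_1r_2\,\rho(M_2M_1)$, the latter on the relevant zero-sum subspace. (The centred chain, and so $B$, is the same for the Gibbs and the collapsed Gibbs samplers, since after centring the updates no longer depend on $\fac[0]$.)

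It then remains to identify $\rho(M_2M_1)$ restricted to vectors of zero sum with $\rho_{aux}$. Here I would invoke the standard fact that a two-component deterministic-scan Gibbs sampler and each of its one-component marginal sub-chains share the same non-unit spectrum: the non-zero eigenvalues of the pair-chain operator are carried by eigenfunctions depending on a single coordinate. Concretely, $M_2M_1$ is precisely the transition matrix of the $i_2$-marginal of $\{i(t)\}$; it is reversible, indeed positive semidefinite, with respect to its stationary law, which by balanced levels is uniform on $\{1,\dots,I_2\}$, so its spectral radius on the orthogonal complement of the constants is its sub-dominant eigenvalue, namely $\rho_{aux}$. Combining, $\rho(B)=r_1r_2\,\rho_{aux}$, which is the asserted formula; invariance of the rate in the update order follows as in Proposition~\ref{prop:averages_rate}, using cyclic invariance of the Gibbs rate together with $\rho(M_1M_2)=\rho(M_2M_1)$.

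The step I expect to be the main obstacle is the bookkeeping around the zero-sum constraint: one must verify carefully that every nuisance quantity created by centring --- $\fac[0]$, the block averages $\ma[k]$, and the rank-one corrections above --- decouples genuinely on the subspace carrying $\delta\facall$, so that the restricted autoregressive matrix is honestly $r_1r_2M_2M_1$ there, rather than that matrix only up to lower-rank perturbations. The second subtle point is matching normalisations so that $\rho_{aux}$, defined as the rate of the full auxiliary chain on pairs, coincides with the sub-dominant eigenvalue of the $I_2\times I_2$ matrix $M_2M_1$; this is exactly where the balanced-levels assumption enters a second time, forcing the relevant marginal to be uniform.
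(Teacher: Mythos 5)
Your proposal is correct and follows essentially the same route as the paper: identify the centred chain as a two-component Gaussian autoregression whose matrix is $r_1r_2$ times the product of the auxiliary transition kernels $P_1,P_2$, and then restrict the spectrum to the zero-sum subspace to replace the unit eigenvalue by the sub-dominant one, $\rho_{aux}$. The only cosmetic differences are that the paper passes to the marginal chain $\{\delta\fac[1](t)\}$ via a de-initialization argument and disposes of the constraint with a dedicated lemma (Lemma~\ref{lemma:linear_constrain}), where you instead keep the block-triangular $B$ for the pair and check directly that the rank-one centring corrections vanish on the zero-sum subspace.
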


\begin{proof}
The chain  $\{\delta\facall(t)\}$ is a two-component Gibbs Sampler that alternates updates from the conditional distributions $\claw{\delta\fac[1]}{y,\delta\fac[2]}$ and $\claw{\delta\fac[2]}{y,\delta\fac[1]}$. Thus, $\{\delta\fac[1](t)\}$ is marginally a Markov chain and its rate of convergence equals the one of $\{\delta\facall(t)\}$, see e.g. \cite{roberts2001deinitializing}.
Let $B_1$ and $B_2$  defined by $\mathbb{E}[\delta\fac[1]\mid \delta\fac[2],y]=B_1 \delta\fac[2]+b_1$ and $\mathbb{E}[\delta\fac[2]\mid \delta\fac[1],y]=B_2\delta\fac[1]+b_2$. It is then a simple computation that 
$\delta\fac[1](t)$ is a Gaussian autoregressive process with autoregression matrix
$B_1B_2$.
Since for balanced levels design it holds ${\datc_j \tau_0 \over \datc_j \tau_0 + \tau_k}=r_k$, 
 it can be deduced from \eqref{eq:fconda} and \eqref{eq:fcondma} that 
$B_1=-r_1 P_1$, 
where $P_1$ is a $I_1\times I_2$ matrix being the transition kernel of the update $i_2|i_1$ of the auxiliary process.
Similarly, one can show $B_2=-r_2P_2$, where $P_2$ is a $I_2\times I_1$ matrix being the transition kernel of the update $i_1|i_2$ of the auxiliary process. Hence, the autoregressive matrix of $\delta\fac[1](t)$ is  
$r_1r_2 P_1P_2$, where $P_1P_2$ is the transition kernel of the auxiliary Gibbs sampler $\{i(t)\}$. Consequently, the spectrum of the autoregressive matrix is $r_1 r_2 \lambda_i$, where $\lambda_i$ are the eigenvalues of $P_1 P_2$. The largest $|\lambda_i|$ is of course 1 since $P_1P_2$ is a stochastic matrix. However, 
since $\delta \fac[1]$ is constrained to have zero sum, by Lemma \ref{lemma:linear_constrain} in the Appendix the rate of convergence of $\delta\fac[1](t)$  is not given by the largest modulus eigenvalue of the autoregressive matrix, but the largest modulus eigenvalue whose eigenvector has zero sum, i.e., we need to consider only the subspace orthogonal to the vector of 1's. Therefore,  the rate of convergence of $\{\delta\facall(t)\}$ equals  $r_1r_2$ times the second largest modulus eigenvalue of $P_1P_2$, which is $\rho_{aux}$ by definition.
\end{proof}

With Proposition \ref{prop:bl_two_fac} in place, the main result of this Section on the mixing time of the algorithms follows immediately. 

\begin{theorem}\label{thm:bl_two_fac}
For balanced levels designs with $K=2$, the rate of convergence of the Gibbs Sampler and the collapsed Gibbs Sampler are given by, respectively,
\begin{align*}
\max\left\{
\frac{N\tau_{0}}{N\tau_{0}+I_1\tau_{1}}
,
\frac{N\tau_{0}}{N\tau_{0}+I_2\tau_{2}}
\right\}
\,,\qquad
\frac{N\tau_{0}}{N\tau_{0}+I_1\tau_{1}}
\frac{N\tau_{0}}{N\tau_{0}+I_2\tau_{2}}
\rho_{aux}\,,
\end{align*}
where $\rho_{aux}$ is the rate of convergence of the auxiliary Gibbs sampler $\{i(t)\}$ with invariant distribution $p(i_1,i_2)= n_{i_1i_2}/N$.
\end{theorem}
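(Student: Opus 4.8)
The plan is to obtain Theorem~\ref{thm:bl_two_fac} by assembling the multigrid decomposition with the preparatory results, exactly in the manner of the proof of Theorem~\ref{thm:rate_gibbs_bc}. The only genuinely new ingredient is that for balanced levels designs (as opposed to balanced cells) the subchain $\{\delta\facall(t)\}$ no longer mixes in a single step, and its rate of convergence is instead supplied by Proposition~\ref{prop:bl_two_fac}.

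First I would record that the map $\phi:\facall\mapsto(\mav,\delta\facall)$ is injective: it is linear, and one recovers every coordinate of $\facall$ from $(\mav,\delta\facall)$ via $\fac[0]=\ma[0]$ and $\fac_j=\ma[k]+\delta\fac_j$. Hence, for the chain $\{\facall(t)\}$ generated by either sampler, Lemma~\ref{lemma:transformation_MC} shows that $\{(\mav,\delta\facall)(t)\}$ is a Markov chain with the same rate of convergence as $\{\facall(t)\}$. By Theorem~\ref{th:multigrid1}, for balanced levels designs $\{\mav(t)\}$ and $\{\delta\facall(t)\}$ are each Markov chains and are independent of one another, so Lemma~\ref{lemma:independent_MC} identifies that common rate with the maximum of the rate of $\{\mav(t)\}$ and the rate of $\{\delta\facall(t)\}$.

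Next I would substitute the two component rates. By Proposition~\ref{prop:averages_rate} (whose scope includes balanced levels designs), for $K=2$ the rate of $\{\mav(t)\}$ equals $\max\{r_1,r_2\}$ with $r_k=N\tau_0/(N\tau_0+I_k\tau_k)$ as in \eqref{eq:ratios} for the Gibbs sampler, and equals $0$ for the collapsed Gibbs sampler. By Proposition~\ref{prop:bl_two_fac} the rate of $\{\delta\facall(t)\}$ equals $r_1 r_2 \rho_{aux}$. Here one must note that $\{\delta\facall(t)\}$ is the \emph{same} Markov chain under both algorithms: from \eqref{eq:fconda}, once averaged, the increment update depends on the remaining coordinates only through $\delta\fac[-k]$ and not through $\fac[0]$, so jointly updating $(\fac[0],\fac[k])$ in the collapsed sampler induces on $\delta\fac[k]$ precisely the draw from $\claw{\delta\fac[k]}{y,\delta\fac[-k]}$ used in the plain Gibbs sampler; since Proposition~\ref{prop:bl_two_fac} is stated and proved in terms of those conditionals, it applies verbatim to both. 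Combining, the Gibbs sampler has rate $\max\{\max\{r_1,r_2\},\,r_1 r_2 \rho_{aux}\}$ and the collapsed Gibbs sampler has rate $\max\{0,\,r_1 r_2 \rho_{aux}\}=r_1 r_2 \rho_{aux}$.

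It then remains to simplify the Gibbs expression. Since $\tau_k>0$ and $I_k\ge 1$, each $r_k\in(0,1)$; and since $P_1P_2$ is a stochastic matrix its second largest modulus eigenvalue satisfies $\rho_{aux}\le 1$. Therefore $r_1 r_2\rho_{aux}\le r_1 r_2<\min\{r_1,r_2\}\le\max\{r_1,r_2\}$, so the outer maximum collapses and the Gibbs rate is $\max\{r_1,r_2\}$, as claimed. I do not expect a hard step: the substantive analysis is already done in Propositions~\ref{prop:averages_rate} and \ref{prop:bl_two_fac}, and the only points needing a line of care are the injectivity of $\phi$, the elementary inequality $r_1 r_2\rho_{aux}<\max\{r_1,r_2\}$, and — the most delicate of the three — the observation that $\{\delta\facall(t)\}$ is literally the same chain for the Gibbs and collapsed samplers, which is what licenses the use of Proposition~\ref{prop:bl_two_fac} for the collapsed version.
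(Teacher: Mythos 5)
Your proposal is correct and follows essentially the same route as the paper, which presents Theorem~\ref{thm:bl_two_fac} as an immediate consequence of Theorem~\ref{th:multigrid1}, Lemmas~\ref{lemma:transformation_MC} and \ref{lemma:independent_MC}, and Propositions~\ref{prop:averages_rate} and \ref{prop:bl_two_fac}, assembled exactly as in the proof of Theorem~\ref{thm:rate_gibbs_bc}. Your two added points of care --- that $\{\delta\facall(t)\}$ has the same transition law under both samplers (visible from the factorisation used in the proof of Theorem~\ref{th:multigrid1}), and that $r_1 r_2\rho_{aux}<\max\{r_1,r_2\}$ so the outer maximum collapses --- are both correct and are exactly what the paper leaves implicit.
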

Note that if the design is in fact balanced cells, the rates given in Theorem \ref{thm:bl_two_fac} match those of Theorem \ref{thm:rate_gibbs_bl}, as they should, since $\rho_{aux}=0$ in this case. 

A corollary to this Theorem is that the mixing time of the Gibbs sampler is $1+\max\{\frac{N\tau_0}{I_1\tau_1},\frac{N\tau_0}{I_2\tau_2}\}$ and that of the collapsed Gibbs sampler is no larger than $1+\min\{\frac{N\tau_0}{I_1\tau_1},\frac{N\tau_0}{I_2\tau_2},T_{aux}\}$, where $T_{aux}$ is the mixing time of the auxiliary process $\{i(t)\}$. An implication of this is that  the collapsed Gibbs Sampler is never slower than the standard Gibbs Sampler and it is has good mixing both when the amount of data per level is low and high. To see this, 
note first the ratios $N/I_1$ and $N/I_2$ coincide with the number of datapoints per column and row, respectively, in the data incidence matrix with entries $n_{i_1i_2}$ and thus their value increases as the amount of data per level increases.
On the contrary the mixing time $T_{aux}$ of the auxiliary process $\{i(t)\}$ tends to decrease as the amount of data per level increases because the latter corresponds to adding more edges in the conditional independence graph, hence larger connectivity in the state space of the auxiliary process. Unfortunately, it is not true in general that the minimum across $\frac{N\tau_0}{I_1\tau_1}$, $\frac{N\tau_0}{I_2\tau_2}$ and $T_{aux}$ is uniformly bounded over $N$. 
Consider for example a design where users and items are split into two communities of equal size, and users inside each community have rated all items from their community and no item from the other community. 
In this case the random walk $\{i(t)\}$ is reducible.
Therefore $T_{aux}=\infty$ and, provided both $N/I_1$ and $N/I_2$ go to infinity, the mixing time of the collapsed Gibbs Sampler diverges as $N$ goes to infinity.

We now address the case of number of factors $K>2$ that Theorem \ref{thm:bl_two_fac} does not cover. 
A conjecture we make in this paper is that $1+\max_{k=1,\dots,K}\frac{N\tau_0}{I_k\tau_k}$ is the mixing time of the Gibbs sampler also for $K>2$.  We have experimented numerically quite extensively, since for specific examples we can compute the mixing time by computing numerically the largest eigenvalue of an explicit matrix, and we have not been able to find a counter-example. The missing step for a generic result would be to show that $\{\delta \facall(t)\}$ always mixes faster than $\{\mav(t)\}$. Such  a result would also immediately prove, due to Proposition \ref{prop:averages_rate}, that the collapsed Gibbs sampler has lower mixing time than the Gibbs sampler for arbitrary number of factors for balanced levels designs. On the other hand, numerical experimentation has also showed that certain extensions of Theorem \ref{thm:bl_two_fac} are not true. We know that the convergence rate of the collapsed Gibbs sampler can be larger than $\prod_{k=1,\dots,K}\frac{N\tau_0}{I_k\tau_k}$ when $K>2$; we also know that the rate will depend on the order that the different components are updated. We return to these points in the Discussion.

We close the section with some asymptotic considerations on the complexity. The following arguments assume that the mixing time of the Gibbs sampler is the conjectured  $1+\max_{k=1,\dots,K}\frac{N\tau_0}{I_k\tau_k}$; we will not consider the collapsed Gibbs sampler in the following considerations since we do not have conjecture for its rate when $K>2$. The asymptotic behaviour of the Gibbs sampler mixing time  depends on the regime under consideration as it was for balanced cells designs. The mixing time can be as bad as $\mathcal{O}(N)$, for example if the number of levels of at least one factor is fixed as $N$ grows; it can be $\mathcal{O}(N^{1-1/K})$ in the regime where $I_1=\ldots=I_K$ and $N=\mathcal{O}(I_1^K)$; but it can also be $\mathcal{O}(1)$ in the sparse observation regime where $N=I_1=\ldots=I_2$. The Appendix discuss the computational cost per iteration, which for these designs can grow quadratically with the number of parameters, as opposed to linearly in the case of balanced cells. In terms of its growth with the observations, this can be $\mathcal{O}(1)$, in infill asymptotics regimes where the number of levels of factors does not grow with $N$;
 it can be $\mathcal{O}(N^{2/K})$ when $I_1=\ldots=I_K$ and $N=\mathcal{O}(I_1^K)$; but 
 it can also be $\mathcal{O}(N)$ in the sparse regime $N=I_1=\ldots=I_2$. Connecting now to the observation in \cite{GaoOwen2017EJS}, we obtain that for $K=2$ when $N=\mathcal{O}(I_1^2)$ and $I_1=I_2$, the complexity of the Gibbs sampler is $\mathcal{O}(N^{3/2})$, hence the algorithm is not scalable.

\section{Simulation Studies}

\subsection{Simulated data with missingness completely at random}
\label{sec:simstudy}
First we consider simulated data with $K=2$ and $I_1=I_2$.
We assume data to be missing completely at random, where for each combination of factors we observe a datapoint, i.e., $n_{i_1i_2}=1$, with probability 0.1 independently of the rest, and otherwise we have a missing observation, i.e.,\ $n_{i_1i_2}=0$.
Since the mixing time of the samplers under consideration does not depend on the the value of the observations $y$, but only on their presence or absence, we can set $y_{i_1i_2}=0$ without affecting the computed convergence rates.
In this context our theory does not apply directly because the designs under consideration are not balanced in general.
However, we can still compute numerically the convergence rate of the Gibbs Sampler and its collapsed version in the context of known precisions, using the results discussed in Section \ref{sec:spectral-Gauss}, to explore to which extent the qualitative findings of our theory still apply.
Figure \ref{fig:two_fac_scaling} displays the behaviour of the mixing time of the Gibbs Sampler and its collapsed version in an outfill asymptotic regime, where both the number of datapoints and factor levels increase.
For the simulations we fixed the precision terms $\tau_k$ to 1 and take $I_1$ in the set $\{50,500,1000,2000\}$.
The results suggest that the mixing time of the Gibbs Sampler diverges with $N$, while the mixing time of its collapsed version converges to 1 as $N$ increases.
This is coherent with the theoretical results of previous section.
In fact, we can compare the mixing times that we computed numerically with the theoretical values computed as if the design were balanced levels, which of course it is not here. The figure shows an extremely close match, which showcases the use of our theory beyond the specific designs that have facilitated the analysis. 
This suggests that the theory previously developed is relevant beyond cases that strictly satisfy balanced levels.
Since the cost per iteration of both samplers is $\Op(N)$, the results in Figure \ref{fig:two_fac_scaling} suggest that, for the asymptotic regime considered in this section, the computational complexity of the Gibbs Sampler is $\Op(N^{3/2})$ and the one of the collapsed Gibbs Sampler is $\Op(N)$.

\subsection{ETH Instructor Evaluations dataset}\label{sec:InstEval}
We now consider a real dataset containing university lecture evaluations by students at ETH Zurich.
The dataset is freely available from the R package \emph{lme4} \citep{bates2015lme4} under the name \emph{InstEval}.
It contains $73421$ observations, each corresponding to a score ranging from 1 to 5, assigned to a lecture together with 6 factors potentially impacting such score, such as identity of the student giving the rating or department that offers the course.
See the \emph{lme4} help material for more details on the dataset.
We fit model \eqref{eq:anova} to the \emph{InstEval} dataset.
Following the notation in \eqref{eq:anova}, we have $N=73421$, $K=6$ and 
 $(I_1,\dots,I_K)=(2972,1128,4,6,2,14)$. Clearly, a categorical response calls for a generalised linear model extension of \eqref{eq:anova}, however the point of this analysis is to test the algorithms, and \eqref{eq:anova} is not an outright unreasonable model to fit for this dataset.

First we consider the known precision case, where the values $\tau_k$ are assumed to be known.
In this context our theory does not apply directly because the design of the dataset is not balanced.
However, we can still compute numerically the convergence rate of the Gibbs Sampler and its collapsed version, using the results discussed in Section \ref{sec:spectral-Gauss}.
Consider first a two-factor case, by restricting our attention to the first two factors.
In this case, setting $\tau_0=\tau_1=\tau_2=1$, the mixing times of the Gibbs Sampler and its collapsed version are, respectively, $68.9$ and $7.8$.
Such values are numerical approximations obtained by computing the autoregressive matrix $B$ explicitly and then using the power method to approximate the size of its largest eigenvalue. 
If instead we consider the first and the last factor, thus having $K=2$ and $(I_1,I_2)=(2972,14)$, the mixing times of the Gibbs Sampler and its collapsed version become, respectively, $5245.6$ and $4.8$.
This is coherent with our theory, which suggests that the presence of a factor with a small number of levels should severely slow down the Gibbs Sampler while not affecting the collapsed version.
We can compute the mixing time implied by Theorem \ref{thm:bl_two_fac}, even if the design is not balanced levels; the numbers we obtain for the Gibbs sampler are $66.1$ and $5245.4$ and the one for the collapsed Gibbs Sampler are $8.3$ and $5.0$, depending on whether $(I_1,I_2)=(2972,1128)$ or $(I_1,I_2)=(2972,14)$, respectively.
All values match closely the values obtained numerically. 
This  suggests that the theory previously developed can be highly informative also for unbalanced cases, provided the level of unbalancedness in the design is moderate.
Finally, if we fit the whole dataset, with $K=6$ and $\tau_k=1$ for $k=0,\dots,6$, the mixing times of the Gibbs Sampler and its collapsed version are, respectively, $36687.0$ and $137.2$. The mixing time of the Gibbs sampler implied by our theory, which again does not apply in this design, is  $36711.5$, which is accurate again. 
In this case the mixing time of the collapsed Gibbs Sampler, despite being orders of magnitude smaller than the non-collapsed version, is moderately large, suggesting that the residual chain $\{\delta \facall\}$ mixes slower than in the other examples.

Next consider the case of unknown precisions, where the hyperparameters $\tau_k$ are given a prior distribution and the posterior of interest is the joint distribution of $\facall$ and $\tau=(\tau_1,\dots,\tau_K)$.
We consider five Markov chain Monte Carlo schemes.
The first two schemes alternate sampling $\tau$ from the conditional distribution $\claw{\tau}{\facall}$ and updating $\facall$ with the Gibbs Sampler and its collapsed version, respectively.
These are the most straightforward extensions of the samplers studied above to the unknown precisions case.
Provided conjugate priors are used, the update $\tau\sim \claw{\tau}{\facall}$ is trivial as the precision terms $\tau_k$ are conditionally independent given $\facall$.
The third and fourth schemes combine the first and second schemes, respectively, with the parameter expanded data augmentation methodology \citep{liu1999parameter,meng1999seeking}. 
In the context under consideration, the parameter expanded methodology seeks to avoid issues related to potential correlation between the two blocks $\facall$ and $\tau$ by introducing appropriate auxiliary parameters, see \cite{gelman2008using} for more discussion.
Finally, the fifth scheme is the No U-Turn sampler \citep{hoffman2014no}, a state-of-the-art Hamiltonian Monte Carlo scheme implemented in the \emph{R} package \emph{RStan} \citep{RStan}.
For the precision parameters, we used a standard flat prior $p(\tau_k^{-1/2})\propto 1$, mainly to facilitate the implementation of parameter expanded methodologies.
In order to avoid potential issues related to using flat priors with a very low number of factor levels, we excluded the factor with only two levels from the analysis, resulting in $K=5$ and $(I_1,\dots,I_K)=(2972,1128,4,6,14)$.
\begin{table}[h!]
\centering
\begin{tabular}{c|c|cccccccccc}
  \hline
\multirow{ 2}{*}{Scheme} & time per &  \multicolumn{2}{c}{Effective Sample Size / time (1/s)}\\ 
 & 1000 iter. & $(\fac[0],\,\ma[1],\,\ma[2],\,\ma[3],\,\ma[4],\,\ma[5])$& 
 $(\sigma_0,\sigma_1,\sigma_2,\sigma_3,\sigma_4,\sigma_5)$\\ 
  \hline
GS & 13.2s & 
(0.07,\,11.0,\,2.12,\,0.16,\,0.21,\,0.87) & 
(60.9,\,15.9,\,36.8,\,3.56,\,2.53,\,2.14)\\
cGS & 14.2s &
(65.9,\,42.1,\,18.1,\,70.5,\,62.3,\,35.0) &
(55.1,\,14.7,\,34.5,\,17.7,\,33.9,\,2.51) \\
GS+PX & 13.5s & 
(0.06,\,10.7,\,2.02,\,0.08,\,0.11,\,0.95) &
(59.6,\,41.2,\,43.9,\,0.85,\,0.58,\,2.33)\\
cGS+PX & 14.4s & 
(62.5,\,44.1,\,19.9,\,62.9,\,63.2,\,34.6) &
(55.1,\,38.0,\,41.9,\,19.2,\,33.0,\,2.96)\\
HMC & 1112.6s & 
(0.11,\,0.78,\,0.19,\,0.08,\,0.25,\,0.68) &
(0.99,\,0.51,\,0.99,\,0.10,\,0.41,\,0.18)\\ 
\hline
\end{tabular}
\caption{
Comparison of sampling schemes on the \emph{InstEval} data, where 
$\sigma_k=1/\sqrt{\tau_k}$. GS and cGS refer to the Gibbs Sampler and the collapsed version with precision updates, while +PX indicates combination with the parameter expanded methodology. HMC referes to the \emph{RStan} implementation.
Numbers are averaged over 10 runs of 10000 iterations for each scheme, discarding the first 1000 samples as burn-in.
}\label{table:ESS}
\end{table}
Table \ref{table:ESS}  and Figure \ref{fig:ACF} report runtimes for the five schemes together with effective sample sizes and autocorrelation functions.
It can be seen that the first four schemes have similar runtimes, but the ones using the collapsed methodology proposed in this paper induce a much faster mixing compared to the others. 
The use of the parameter expansion methodology provides a further, very limited in this case, improvement.
\begin{figure}[h!]
\includegraphics[width=\linewidth]{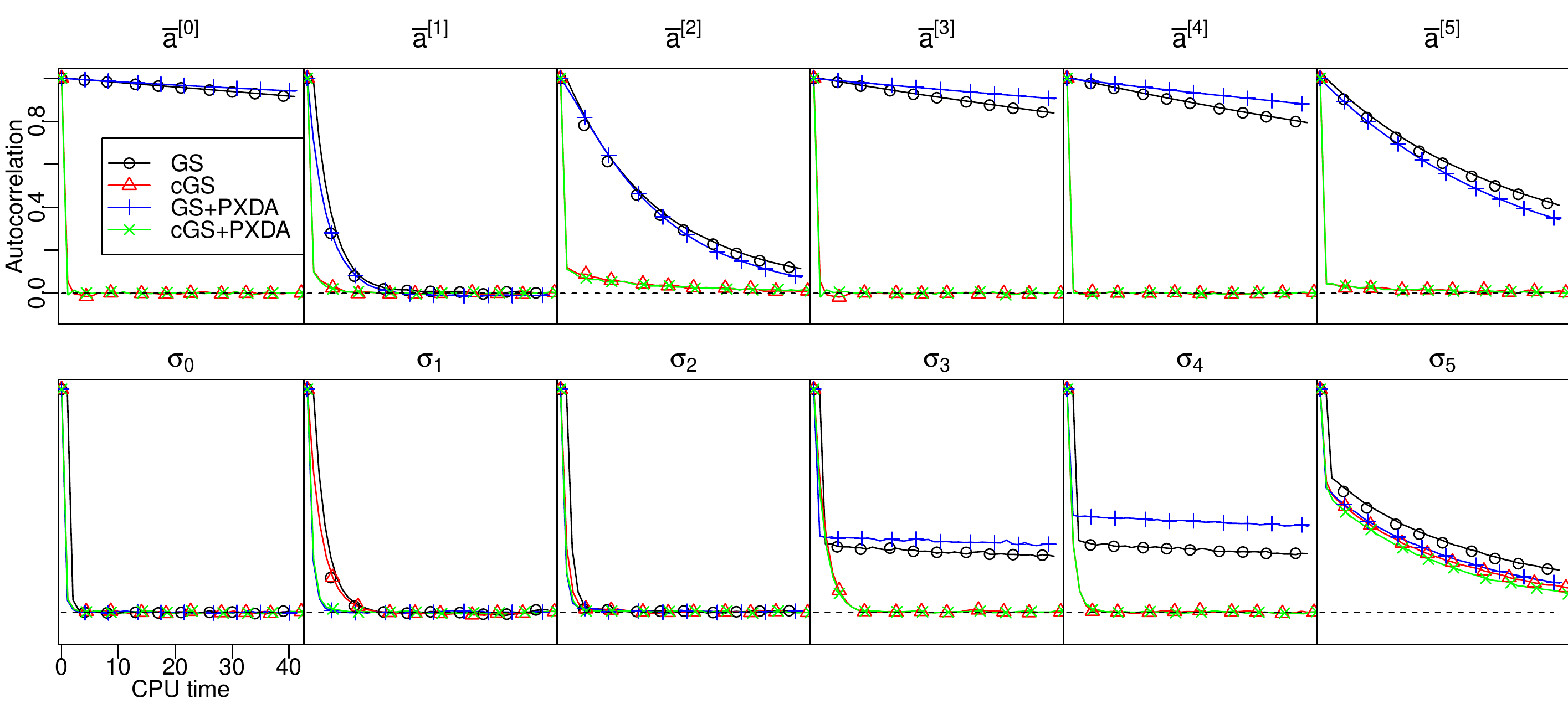}
\caption{Autocorrelation functions of $\ma[k]$ and $\sigma_k=\tau_k^{-1/2}$, for $k=1,\dots,K$, on the \emph{InstEval} dataset.
Lines are averaged over 10 runs of 10000 iterations for each scheme.}\label{fig:ACF}
\end{figure}
On the other hand, Hamiltonian Monte Carlo has a cost per iteration that is two orders of magnitude larger than the other schemes, resulting in the lowest effective sample sizes per unit of computation time.
Finally, to obtain a higher level sense of the practicality of the approach we pursue in this article, we also fit the same crossed effect model in a frequentist fashion using the \emph{R} package \emph{lme4}, which took $40.9$ seconds to run.
All computations were performed on the same desktop computer with 16GB of RAM and an $i7$ Intel processor.
It is worth noting that the first four schemes were directly implemented using a high level language such as $R$, so we would expect significant further speed-ups by using a low-level language and use of distributed computing for the precomputations needed for the Gibbs samplers.

\section{Discussion}

There are many directions this work can move forward. We highlight the two that are most imminent. First is to investigate the conjecture made in Section \ref{sec:balanced-levels} that the mixing time of the Gibbs sampler for balanced levels designs is $1+\max_{k=1,\dots,K}\frac{N\tau_0}{I_k\tau_k}$. If this is true we also obtain that the collapsed Gibbs sampler has always smaller rate for such designs. The second is to obtain a characterisation of the rate of the collapsed Gibbs sampler for such designs when $K>2$. From numerical experimentation we know that the natural extension of the expression of Theorem \ref{prop:bl_two_fac} is not true for $K>2$, hence a different line of attack is needed. 

\section*{Acknowledgement}
The authors would like to acknowledge helpful discussions with Art B.\ Owen.
Papaspiliopoulos acknowledges financial support from the
Spanish Ministry of Economics via a research grant.
Zanella was supported by the European Research Council (ERC) throught the 
``New Directions in Bayesian NonParameterics'' StG 306406.

\section*{Appendix}

\subsection*{Proof of Theorems and auxiliary results}
\begin{proof}[of Theorem \ref{th:multigrid1}]
  For concreteness and without affecting the validity of the argument we assume that the algorithm updates factors and their levels in ascending order, i.e., first simulates $\fac[0]$, then $\fac[1]_1$, $\fac[1]_2$, and so on and so forth.  We first establish the result for the Gibbs sampler,i.e., part 1.  Note that due to the conditional independence structure the algorithm can be equivalently represented as one that samples in blocks according to the conditional laws $\claw{\fac}{y,\fac[-k]}$. For each iteration $t$, each such draw, $\fac(t)$ can be transformed to $\ma(t)$ and $\delta \fac(t)$. Proposition \ref{prop:b_l} establishes that the
  \begin{align}
    & \claw{\ma(t),\delta \fac(t)}{y,\fac[0](t),\ldots,\fac[k-1](t),\fac[k+1](t-1),\ldots,\fac[K](t-1)} = \nonumber\\
    & \claw{\ma(t)}{y,\ma[0](t),\ldots,\ma[k-1](t),\ma[k+1](t-1),\ldots,\ma[K](t-1)} \times \label{eq:mav_conditionals_gibbs}\\
    & \claw{\delta \fac(t)}{y,\delta \fac[1](t),\ldots,\delta \fac[k-1](t),\delta \fac[k+1](t-1),\ldots,\delta \fac[K](t-1)}\,.\nonumber
  \end{align}
  Appealing to the equivalence of local and global Markov properties, as in Section 3 of \cite{besag}, we obtain that the processes $\{\mav(t)\}$ and $\{\delta \facall(t)\}$, obtained as functions of $\{\facall(t)\}$, are each a Markov chain with respect to its own filtration, and independent of each other. 

The collapsed Gibbs Sampler case is analogous.
Here the sampler iterates the updates of $\claw{\fac}{y,\fac[-0,-k]}$ for $k=1,\dots,K$.
It can be easily deduced from Proposition 1 that  $\claw{\mav^{(-0)},\delta \facall}{y}=\claw{\mav^{(-0)}}{y} \claw{\delta \facall}{y}$.
Therefore, transforming each draw $\fac[k](t)$ to $\ma(t)$ and $\delta \fac(t)$, we obtain
  \begin{align*}
    & \claw{\ma(t),\delta \fac(t)}{y,\fac[1](t),\ldots,\fac[k-1](t),\fac[k+1](t-1),\ldots,\fac[K](t-1)} = \\
    & \claw{\ma(t)}{y,\ma[1](t),\ldots,\ma[k-1](t),\ma[k+1](t-1),\ldots,\ma[K](t-1)} \times\\
    & \claw{\delta \fac(t)}{y,\delta \fac[1](t),\ldots,\delta \fac[k-1](t),\delta \fac[k+1](t-1),\ldots,\delta \fac[K](t-1)}\,.
  \end{align*}
  It follows that the processes $\{\mav^{(-0)}(t)\}$ and $\{\delta \facall(t)\}$, obtained as functions of $\{\facall(t)\}$, are each a Markov chain with respect to its own filtration, and independent of each other. 
\end{proof}

\begin{proof}[of Lemma \ref{lemma:transformation_MC}]
The Markovianity of $\{y(t)\}$ follows from the fact that the $\sigma$-algebras associated to $x(t)$ and $y(t)$ coincide.
Denote by $\mathcal{X}$ the state space $\{x(t)\}$ and by $P$ its transition kernel.
Similarly $\mathcal{Y}$ and $Q$ for $\{y(t)\}$.
By taking $\mathcal{Y}=\phi(\mathcal{X})$ we can assume $\phi$ to be invertible without loss of generality.
For every $t\geq 1$, $P^t$ and $Q^t$ are integral operators on $L^2(\pi)$ and $L^2(\mu)$, where $\mu$ is the pushforward of $\pi$ under $\phi$ defined as $\mu(A)=\pi(\phi^{-1}(A))$ for every measurable $A$.
From $y(t)=\phi(x(t))$ it follows $Q^t(\psi(f))=\psi(P^t(f))$, where $\psi:f\mapsto f\circ \phi^{-1}$ is the linear map defined by $f\circ \phi^{-1}(y)=f(\phi^{-1}(y))$.
The equality between the rates of convergence of $P$ and $Q$ follows by noting that $\psi$ is an isomorphism from $L^2(\pi)$ to $L^2(\mu)$.
\end{proof}


\begin{proof}[of Lemma \ref{lemma:independent_MC}]
By the independence assumption, the integral operator $P$ associated to $\{x(t)\}$ equals the tensor product $P_1\otimes P_2$, where  $P_1$ and $P_2$ are the integral operators on $L^2(\pi_1)$ and $L^2(\pi_2)$ associated to $\{x_1(t)\}$ and $\{x_2(t)\}$.
It follows that the spectrum $\sigma(P)$ is the product $\sigma(P_1)\sigma(P_2)$ of the two spectra of $P_1$ and $P_2$ (see e.g. \cite{brown1966spectra}).
Since the largest modulus eigenvalue in both $\sigma(P_1)$ and $\sigma(P_2)$ is 1 it follows that the second largest modulus eigenvalue in $\sigma(P_1)\sigma(P_2)$ equals the maximum of the second largest modulus eigenvalues in $\sigma(P_1)$ and $\sigma(P_2)$.
\end{proof}

\begin{lemma}\label{lemma:linear_constrain}
Let $\{x(t)\}$ be a $d$-dimensional gaussian AR(1) process with $\mathbb{E}[x(t+1)\mid x(t)]=Bx(t)+b$, for some fixed $b$, and stationary distribution $N(\mu,\Sigma)$ concentrated on the hyperplane $\sum_{i} x_i=0$ and $\Sigma$ of rank $d-1$.
Then the rate of convergence of $\{x(t)\}$ equals the largest modulus eigenvalue of $B$ whose eigenvector has zero sum.
\end{lemma}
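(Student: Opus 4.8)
The plan is to reduce the degenerate chain on $\mathbb{R}^d$ to a non-degenerate Gaussian autoregression living on the hyperplane $V=\{v\in\mathbb{R}^d:\mathbf{1}^\T v=0\}$, and then to invoke the standard spectral characterisation of Gaussian Gibbs samplers recalled in Section~\ref{sec:spectral-Gauss}. The whole difficulty is bookkeeping: once we know $V$ is $B$-invariant, the result is immediate.

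First I would establish that $V$ is invariant under $B$. Running $\{x(t)\}$ from stationarity, one has $\mathbf{1}^\T x(t)=0$ almost surely for all $t$ since $\pi=N(\mu,\Sigma)$ is supported on the hyperplane; in particular $\mu\in V$, and $\mathbf{1}^\T\Sigma\mathbf{1}=\mathrm{Var}(\mathbf{1}^\T x(t))=0$, so $\Sigma\mathbf{1}=0$ and, $\Sigma$ having rank $d-1$, the support of $\pi$ is all of $\mu+V$. Taking conditional expectations gives $\mathbf{1}^\T(Bx(t)+b)=\mathbb{E}[\mathbf{1}^\T x(t+1)\mid x(t)]=0$ almost surely, so the affine function $x\mapsto\mathbf{1}^\T Bx+\mathbf{1}^\T b$ vanishes on $\mu+V$; subtracting two points of $\mu+V$ yields $\mathbf{1}^\T Bv=0$ for all $v\in V$, i.e.\ $BV\subseteq V$ (equivalently $B^\T\mathbf{1}\in\mathrm{span}(\mathbf{1})$). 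Also $b=(I-B)\mu$ by stationarity, so the centred process satisfies $\mathbb{E}[x(t+1)-\mu\mid x(t)]=B(x(t)-\mu)$ and stays in $V$.

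Next, fix a linear isometry $\Phi:\mathbb{R}^{d-1}\to V$ (so $\Phi^\T\Phi=I_{d-1}$ and $\Phi\Phi^\T$ is the orthogonal projection onto $V$), and set $z(t)=\Phi^\T(x(t)-\mu)$. On the support of $\pi$ the map $x\mapsto\Phi^\T(x-\mu)$ is a bijection onto $\mathbb{R}^{d-1}$, so by Lemma~\ref{lemma:transformation_MC} the process $\{z(t)\}$ is a Markov chain with the same rate of convergence as $\{x(t)\}$. Using $x(t)-\mu=\Phi z(t)$ and $BV\subseteq V$, one checks directly that $\{z(t)\}$ is a Gaussian AR(1) with $\mathbb{E}[z(t+1)\mid z(t)]=\tilde B z(t)$, where $\tilde B=\Phi^\T B\Phi$, and with non-degenerate stationary law $N(0,\Phi^\T\Sigma\Phi)$ of full rank $d-1$. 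The spectral theory of Section~\ref{sec:spectral-Gauss} (Theorem~1 of \cite{roberts1997updating}), applied to this non-degenerate Gaussian autoregression, then gives that the rate of convergence of $\{z(t)\}$ equals $\rho(\tilde B)$.

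Finally I would identify $\rho(\tilde B)$ with the quantity in the statement: $\tilde B=\Phi^\T B\Phi$ is the matrix of $B|_V$ in an orthonormal basis of $V$, so $\rho(\tilde B)=\rho(B|_V)$, and $\lambda$ is an eigenvalue of $B|_V$ if and only if $B$ has an eigenvector with zero sum associated to $\lambda$ (over $\mathbb{C}$ if $\lambda$ is complex). Chaining the three equalities---rate of $\{x(t)\}$ $=$ rate of $\{z(t)\}$ $=$ $\rho(\tilde B)$ $=$ largest modulus eigenvalue of $B$ admitting a zero-sum eigenvector---completes the proof. The step requiring the most care is the invariance $BV\subseteq V$ together with the attendant measure-theoretic checks: this is exactly what makes the passage to coordinates on $V$ a genuine conjugation of $B$ rather than a mere projection (which could suppress relevant eigenvalues), and it is where the hypothesis that $\pi$ is supported on---\emph{and has full-dimensional support within}---the hyperplane is essential.
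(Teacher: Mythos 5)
Your proposal is correct and follows essentially the same route as the paper: pass to a $(d-1)$-dimensional non-degenerate Gaussian AR(1) on the hyperplane via Lemma~\ref{lemma:transformation_MC}, apply Theorem~1 of \cite{roberts1997updating}, and identify the spectrum of the reduced autoregression matrix with the zero-sum eigenvalues of $B$; the only difference is that you use an orthonormal parametrisation $\Phi$ of the hyperplane where the paper simply drops the last coordinate. Your version is in fact marginally more complete, since the explicit verification that $BV\subseteq V$ makes $\tilde B$ a genuine conjugate of $B|_V$ and so gives both inclusions of the spectra, whereas the paper only checks that zero-sum eigenvectors of $B$ produce eigenvectors of $\tilde B$.
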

\begin{proof}
The proof boils down to considering $\{x_{-d}(t)\}$ and applying the classical, non-singular version of this theorem, see \cite[Thm.1]{roberts1997updating}.
More precisely, Lemma \ref{lemma:transformation_MC} and the constrain $\sum_i x_i=0$ imply that $\{x_{-d}(t)\}$ is a Markov chain with the same rate of convergence as $\{x(t)\}$.
Then, from 
\begin{align*}
\mathbb{E}[x_i(t+1)\mid x_{-d}(t)]
&=
\mathbb{E}[x_i(t+1)\mid x_{-d}(t),x_{d}(t)=-\sum_{j\neq d}x_{j}(t)]
\\&=
\sum_{j\neq d}B_{ij}x_{j}(t)+B_{id}(-\sum_{j\neq d}x_{j}(t))+b_i
=
\sum_{j\neq d}(B_{ij}-B_{id})x_{j}(t)+b_i\,,
\end{align*}
it follows that $\mathbb{E}[x_{-d}(t+1)\mid x_{-d}(t)]=\tilde{B}x_{-d}(t)+b_{-d}$ with $\tilde{B}_{ij}=B_{ij}-B_{id}$ for all $i,j\in\{1,\dots,d-1\}$.
Thus, by \citet[Thm.1]{roberts1997updating}, the rate of convergence of $\{x_{-d}(t)\}$ equals the largest modulus eigenvalue of $\tilde{B}$.
To conclude we show that if $v$ is an eigenvector of $B$ such that $\sum_iv_i=0$ it follows that $v_{-d}$ is an eigenvector of $\tilde{B}$ with the same eigenvalue.
Indeed, if $B v=\lambda v$ and $\sum_iv_i=0$ it follows
\[
(\tilde{B}v_{-d})_i
=\sum_{j\neq d}\tilde{B}_{ij}v_j
=\sum_{j\neq d}B_{ij}v_j-B_{id}\sum_{j\neq d}v_j
=\sum_{j}B_{ij}v_j
=\lambda v_i\,.
\]
\end{proof}

\subsection*{Cost per iteration of the Gibbs Sampler and its collapsed version}
In order to implement the Gibbs Sampler, the computation of the one and two-dimensional marginals $\{\datc[k]\}$ and $\{\datc[l,k]\}$ of the data incidence table are required, as well as the computation of the weighted averages $\{\wmy^{(k)}_j\}$ of the data.
Such precomputation needs to be performed only once and requires $\mathcal{O}(N)$ 
 operations in general.
Then, at each iteration of the Gibbs Sampler the update of $\fac[0]$ and each $\fac[k]_j$ can be accomplished in $\mathcal{O}(\sum_{l}I_l)$  and $\mathcal{O}(\sum_{l\neq k}I_l)$ operations using \eqref{eq:fcondmu} and \eqref{eq:fconda}, respectively, resulting in a total of $\mathcal{O}(\sum_{k}I_k\sum_{l\neq k}I_l)$ operations for each Gibbs sweep.
The latter can be as bad as $\mathcal{O}(p^2)$, where $p=\sum_k I_k$ is the number of parameters and its relationship with $N$ depends on the asymptotic regime under consideration.

For the collapsed Gibbs Sampler one needs to additionaly precompute $\{s^{(k)}\}$ defined in Proposition \ref{prop:marginal_law}, which can be done in $\mathcal{O}(p)$ operations given $\{\datc[k]\}$. Therefore the collapsed Gibbs Sampler has a precomputation cost of order $\mathcal{O}(N)$, 
 similarly to the standard Gibbs Sampler.
Moreover, the updates of $\fac[0]$ from \eqref{eq:fcondmu_collapsed} for $k=1,\dots,K$ require $\mathcal{O}(\sum_{k}I_k\sum_{l\neq k}I_l)$ operations altogether, which is at most $\mathcal{O}(p^2)$.
Thus the collapsed Gibbs Sampler has also the same cost per iteration of the standard Gibbs Sampler.

In the balanced cells case, the only precomputation required is the one of $\{\wmy^{(k)}_j\}$, which has $\mathcal{O}(N)$ cost.
Also, each Gibbs or collapsed Gibbs sweep can be accomplished in $\mathcal{O}(p)$ operations, rather than $\mathcal{O}(p^2)$, using \eqref{eq:fcondmu-bl}, \eqref{eq:fconda-cells} and the version of \eqref{eq:fcondmu_collapsed} for balanced cells.


\bibliographystyle{biometrika}
\bibliography{crossed}

\begin{thebibliography}{18}
\expandafter\ifx\csname natexlab\endcsname\relax\def\natexlab#1{#1}\fi

\bibitem[{Bates et~al.(2015)Bates, M{\"a}chler, Bolker \&
  Walker}]{bates2015lme4}
\textsc{Bates, D.}, \textsc{M{\"a}chler, M.}, \textsc{Bolker, B.} \&
  \textsc{Walker, S.} (2015).
\newblock Fitting linear mixed-effects models using {lme4}.
\newblock \textit{Journal of Statistical Software} \textbf{67}, 1--48.

\bibitem[{Besag(1974)}]{besag}
\textsc{Besag, J.} (1974).
\newblock Spatial interaction and the statistical analysis of lattice systems.
\newblock \textit{J. Roy. Statist. Soc. Ser. B} \textbf{36}, 192--236.
\newblock With discussion by D. R. Cox, A. G. Hawkes, P. Clifford, P. Whittle,
  K. Ord, R. Mead, J. M. Hammersley, and M. S. Bartlett and with a reply by the
  author.

\bibitem[{Brown \& Pearcy(1966)}]{brown1966spectra}
\textsc{Brown, A.} \& \textsc{Pearcy, C.} (1966).
\newblock Spectra of tensor products of operators.
\newblock \textit{Proceedings of the American Mathematical Society}
  \textbf{17}, 162--166.

\bibitem[{Gao \& Owen(2017)}]{GaoOwen2017EJS}
\textsc{Gao, K.} \& \textsc{Owen, A.} (2017).
\newblock Efficient moment calculations for variance components in large
  unbalanced crossed random effects models.
\newblock \textit{Electronic Journal of Statistics} \textbf{11}, 1235--1296.

\bibitem[{Gelman(2005)}]{gelman-anova}
\textsc{Gelman, A.} (2005).
\newblock Analysis of variance---why it is more important than ever.
\newblock \textit{Ann. Statist.} \textbf{33}, 1--53.
\newblock With discussions and a rejoinder by the author.

\bibitem[{Gelman et~al.(2008)Gelman, Van~Dyk, Huang \&
  Boscardin}]{gelman2008using}
\textsc{Gelman, A.}, \textsc{Van~Dyk, D.~A.}, \textsc{Huang, Z.} \&
  \textsc{Boscardin, J.~W.} (2008).
\newblock Using redundant parameterizations to fit hierarchical models.
\newblock \textit{Journal of Computational and Graphical Statistics}
  \textbf{17}, 95--122.

\bibitem[{Hoffman \& Gelman(2014)}]{hoffman2014no}
\textsc{Hoffman, M.~D.} \& \textsc{Gelman, A.} (2014).
\newblock {The No-U-turn sampler: adaptively setting path lengths in
  Hamiltonian Monte Carlo}.
\newblock \textit{Journal of Machine Learning Research} \textbf{15},
  1593--1623.

\bibitem[{Liu \& Wu(1999)}]{liu1999parameter}
\textsc{Liu, J.~S.} \& \textsc{Wu, Y.~N.} (1999).
\newblock Parameter expansion for data augmentation.
\newblock \textit{Journal of the American Statistical Association} \textbf{94},
  1264--1274.

\bibitem[{Meng \& Van~Dyk(1999)}]{meng1999seeking}
\textsc{Meng, X.-L.} \& \textsc{Van~Dyk, D.~A.} (1999).
\newblock Seeking efficient data augmentation schemes via conditional and
  marginal augmentation.
\newblock \textit{Biometrika} \textbf{86}, 301--320.

\bibitem[{Papaspiliopoulos et~al.(2003)Papaspiliopoulos, Roberts \&
  Sk\"old}]{papa2003}
\textsc{Papaspiliopoulos, O.}, \textsc{Roberts, G.~O.} \& \textsc{Sk\"old, M.}
  (2003).
\newblock Non-centered parameterizations for hierarchical models and data
  augmentation.
\newblock In \textit{Bayesian statistics, 7 ({T}enerife, 2002)}. Oxford Univ.
  Press, New York, pp. 307--326.
\newblock With a discussion by Alan E. Gelfand, Ole F. Christensen and Darren
  J. Wilkinson, and a reply by the authors.

\bibitem[{Papaspiliopoulos et~al.(2007)Papaspiliopoulos, Roberts \&
  Sk{\"o}ld}]{papaspiliopoulos2007general}
\textsc{Papaspiliopoulos, O.}, \textsc{Roberts, G.~O.} \& \textsc{Sk{\"o}ld,
  M.} (2007).
\newblock A general framework for the parametrization of hierarchical models.
\newblock \textit{Statistical Science} , 59--73.

\bibitem[{Papaspiliopoulos \& Zanella(2017)}]{zanella_bp}
\textsc{Papaspiliopoulos, O.} \& \textsc{Zanella, G.} (2017).
\newblock A note on mcmc for nested multilevel regression models via belief
  propagation.
\newblock \textit{arXiv preprint arXiv:1704.06064} .

\bibitem[{Roberts \& Rosenthal(2001)}]{roberts2001deinitializing}
\textsc{Roberts, G.~O.} \& \textsc{Rosenthal, J.~S.} (2001).
\newblock {Markov Chains and De-initializing Processes}.
\newblock \textit{Scandinavian Journal of Statistics} \textbf{28}, 489--504.

\bibitem[{Roberts \& Sahu(1997)}]{roberts1997updating}
\textsc{Roberts, G.~O.} \& \textsc{Sahu, S.~K.} (1997).
\newblock Updating schemes, correlation structure, blocking and
  parameterization for the gibbs sampler.
\newblock \textit{Journal of the Royal Statistical Society: Series B
  (Statistical Methodology)} \textbf{59}, 291--317.

\bibitem[{{Stan Development Team}(2018)}]{RStan}
\textsc{{Stan Development Team}} (2018).
\newblock {RStan}: the {R} interface to {Stan}.
\newblock R package version 2.17.3.

\bibitem[{Volfovsky \& Hoff(2014)}]{hoff-anova}
\textsc{Volfovsky, A.} \& \textsc{Hoff, P.~D.} (2014).
\newblock Hierarchical array priors for {ANOVA} decompositions of
  cross-classified data.
\newblock \textit{Ann. Appl. Stat.} \textbf{8}, 19--47.

\bibitem[{Wilkinson \& Yeung(2004)}]{wilki04}
\textsc{Wilkinson, D.~J.} \& \textsc{Yeung, S. K.~H.} (2004).
\newblock A sparse matrix approach to {B}ayesian computation in large linear
  models.
\newblock \textit{Comput. Statist. Data Anal.} \textbf{44}, 493--516.

\bibitem[{Zanella \& Roberts(2017)}]{zanella_mg}
\textsc{Zanella, G.} \& \textsc{Roberts, G.} (2017).
\newblock Analysis of the gibbs sampler for gaussian hierarchical models via
  multigrid decomposition.
\newblock \textit{arXiv preprint arXiv:1703.06098} .

\end{thebibliography}

\end{document}